\numberwithin{equation}{section}
\numberwithin{figure}{section}
\theoremstyle{plain}
\newtheorem{thm}{Theorem}[section]
\newtheorem*{thm*}{Theorem}
\newtheorem{prop}[thm]{Proposition}
\newtheorem{fact}[thm]{Fact}
\newtheorem{cor}[thm]{Corollary}
\newcommand{\eps}{\epsilon}
\newcommand{\zoo}{\{0,1\}}
\newcommand{\zoon}{\zoo^n}
\newcommand{\orth}{\operatorname{orth}}
\newcommand{\AND}{\text{\rm AND}}
\newcommand{\PTP}{\operatorname{PTP}}
\newcommand{\ED}{\operatorname{ED}}
\newcommand{\SURJ}{\operatorname{SURJ}}
\newcommand{\kED}{\operatorname{ED}^k}
\newcommand{\OR}{\operatorname{OR}}
\newcommand{\PP}{\mathsf{PP}}
\newcommand{\NISZK}{\mathsf{NISZK}}
\newcommand{\UPP}{\mathsf{UPP}}
\newcommand{\QMA}{\mathsf{QMA}}
\newcommand{\NP}{\mathsf{NP}}
\newcommand{\NIPZK}{\mathsf{NIPZK}}
\newcommand{\SBQP}{\mathsf{SBQP}}
\renewcommand{\Re}{\mathbb{R}}
\newcommand{\Domain}{\mathcal{D}}
\DeclareMathOperator{\dom}{dom}
\DeclareMathOperator{\im}{im}
\DeclareMathOperator*{\Exp}{\mathbf E}
\title{Vanishing-Error Approximate Degree and QMA Complexity}
\author[Alexander A. Sherstov]{Alexander A. Sherstov$^*$} 
\author[Justin Thaler]{Justin Thaler$^\dagger$}
\thanks{$^{*}$ Computer Science Department, UCLA, Los Angeles, CA~90095.
Supported by NSF grant CCF-1814947.
 Email: \texttt{sherstov@cs.ucla.edu}\\
 $^\dagger$ Department of Computer Science,  Georgetown University, Washington, DC~20057.
Supported by NSF CAREER award CCF-1845125.  Email: \texttt{justin.thaler@georgetown.edu}
 }
\begin{document}

\begin{abstract}
The \emph{$\epsilon$-approximate degree} of a function $f\colon X \to \zoo$
is the least degree of a multivariate real polynomial $p$ such that $|p(x)-f(x)| \leq \epsilon$ for all $x \in X$.
We determine the $\eps$-approximate degree of the element distinctness function, the surjectivity function, and the permutation testing problem,
showing they are $\Theta(n^{2/3}
\log^{1/3}(1/\epsilon))$, $\tilde\Theta(n^{3/4} \log^{1/4}(1/\epsilon))$, and 
$\Theta(n^{1/3} \log^{2/3}(1/\epsilon))$, respectively. Previously, these bounds were known only for constant~$\epsilon.$ 

We also derive a connection between vanishing-error approximate
degree and quantum Merlin--Arthur (QMA) query complexity.
We use this connection to show that the QMA complexity of permutation testing is
$\Omega(n^{1/4})$. This
improves on the previous best lower bound of
$\Omega(n^{1/6})$ due to Aaronson (\emph{Quantum Information \& Computation}, 2012),
and comes somewhat close to matching a known upper bound of $O(n^{1/3})$.

\end{abstract}

\maketitle

\belowdisplayskip=12pt plus 1pt minus 3pt 
\abovedisplayskip=12pt plus 1pt minus 3pt 

\thispagestyle{empty}

\section{Introduction}
The \emph{$\epsilon$-approximate degree} of a function $f\colon X \to \zoo$, 
denoted $\deg_{\epsilon}(f)$, is the least degree of a multivariate real-valued polynomial $p$ such that $|p(x)-f(x)| \leq \epsilon$ for all inputs $x \in X$.
Lower bounds on approximate degree have many applications in theoretical computer science,
ranging from quantum query and communication lower bounds, to oracle separations and cryptographic secret sharing schemes. 
Upper bounds on approximate degree have important algorithmic implications in learning theory and 
differential privacy, and underlie state-of-the-art circuit and formula size lower bounds. 
The interested reader can find a bibliographic overview of these applications 
in~\cite{bkt, sherstovalgopoly}.

This paper focuses on three well-studied functions
whose approximation by polynomials has 
applications to quantum computing and beyond.
The first function is
\emph{element distinctness} $\ED_n,$ where the input is
a list of $n$ numbers from $\{1,2,\ldots,n\}$ and the
objective is to determine if the numbers are pairwise
distinct. The second function is
\emph{surjectivity} $\SURJ_{n,r},$ where the input is a
list of $n$ numbers from the range $\{1,2,\ldots,r\}$ and the
goal is to check whether every range element appears on
the list. The canonical setting is $r=\lfloor
cn\rfloor$ for some constant $0<c<1.$ The third problem
that we study is \emph{permutation testing}
$\PTP_{n,\alpha},$ parameterized by a constant
$0<\alpha<1.$ Here, the input is 
a list of $n$ numbers from $\{1,2,\ldots,n\},$ and the
objective is to distinguish the case when 
the list contains every range element from the case
when the list contains at most $\alpha n$ range elements. 
In the context of polynomial approximation, it is
customary to represent the input to
these functions as a Boolean
matrix $x=[x_{i,j}]$, where $x_{i,j}=1$ if and only if
the $i$th element on the list equals $j$.

\subsection*{Vanishing-error approximate degree}
Much work in the area has focused on \emph{bounded-error} approximate degree, defined
for a Boolean function $f$ as the quantity $\deg_{1/3}(f).$ 
The choice of constant $1/3$ here is arbitrary,
as $\deg_{\eps}(f)=\Theta(\deg_{1/3}(f))$ for all constants $0<\eps<1/2$. 
In particular, the bounded-error approximate degrees of 
element distinctness,
surjectivity, and
permutation testing are known to be
$\Theta(n^{2/3}),$ 
$\tilde\Theta(n^{3/4}),$ 
and 
$\Theta(n^{1/3}),$ respectively~\cite{%
aaronsonshi,
ambainis05collision,
kutin05collision,
aaronson,
sherstovalgopoly,
bkt}.
Our understanding of approximate degree
with vanishing error, $\epsilon=o(1),$ is far less complete.
Among the very few functions whose vanishing-error 
approximate degree has been determined is the $n$-bit $\AND$
function, with the asymptotic bound
$\deg_{\eps}(\AND_n) = \Theta(n^{1/2} \log^{1/2}(1/\eps))$
due to Buhrman et al.~\cite{bCdWZ}.
We give a new and entirely different proof of their result.
Our technique further allows us to settle the vanishing-error
approximate degrees of the much more complicated
functions of element distinctness, surjectivity, and
permutation testing:

\begin{thm}
\label{thm:vanishing}
Let $0<c<1$ and $0<\alpha<1$ be arbitrary constants.
Then
\begin{align*}
&\deg_\epsilon(\ED_n)=\Omega\left(n^{2/3} \left(\log\frac1\eps\right)^{1/3}\right),\\
&\deg_\epsilon(\SURJ_{n,\lfloor cn\rfloor})=\tilde\Omega\left(n^{3/4} \left(\log\frac1\eps\right)^{1/4}\right),\\
&\deg_\epsilon(\PTP_{n,\alpha})=\Omega\left(n^{1/3} \left(\log\frac1\eps\right)^{2/3}\right)
\end{align*}
for all $1/3^n\leq \epsilon\leq 1/3.$ 
\end{thm}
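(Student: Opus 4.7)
The plan is to argue via the method of dual polynomials, adapting known bounded-error dual witnesses for these three functions into vanishing-error versions. Recall that to certify $\deg_\eps(f)\geq d$ it suffices to exhibit a function $\psi$ on the input domain with $\|\psi\|_1=1$, with $\psi$ orthogonal to every polynomial of degree $<d$, and with $\langle\psi,f\rangle\geq 1-2\eps$. Each of $\ED$, $\SURJ$, and $\PTP$ admits a bounded-error dual witness built by \emph{dual block composition}: an outer dual for an AND-like function is glued to an inner ``gadget'' dual that encodes the row-stochastic constraints of the matrix-valued input.

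As a warm-up and as a reusable building block, I would first re-derive the Buhrman et al.\ bound $\deg_\eps(\AND_n)=\Theta(\sqrt{n\log(1/\eps)})$ via an explicit dual polynomial. The construction is based on a polynomial of degree $d\sim\sqrt{n\log(1/\eps)}$ that concentrates its mass at the all-ones input while being orthogonal, after symmetrization, to every univariate polynomial of degree $<d$ on $\{0,1,\dots,n-1\}$. The point is to produce a \emph{vanishing-error} dual witness for $\AND$ that can be inserted into a dual block composition in place of the existing bounded-error outer component.

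Next, for each of $\ED$, $\SURJ$, and $\PTP$, I would take the bounded-error dual witness from Aaronson--Shi, Sherstov, and Bun--Kothari--Thaler and substitute its outer AND-type component with the vanishing-error dual from the warm-up, parameterized by an outer block count $k$ and outer degree $\sim\sqrt{k\log(1/\eps)}$. The total degree of the resulting composed dual is the product of the inner gadget's degree and this outer degree. The specific exponents $\log^{1/3}$, $\log^{1/4}$, and $\log^{2/3}$ in the theorem arise by optimizing $k$ against the inner block size $m$ subject to $km\leq n$ and the known approximate degree of the gadget; for $\PTP$, the comparatively small bounded-error exponent $n^{1/3}$ leaves the most room for logarithmic amplification, yielding the largest exponent $2/3$.

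The main obstacle is verifying orthogonality and correlation of the composed dual on the constrained matrix domain of row-stochastic $x$. Dual block composition has been handled for bounded error in prior work, but pushing the correlation from the bounded-error threshold of $2/3$ all the way to $1-\eps$ interacts nontrivially with the row constraints, and the argument appears to require a careful post-processing step that zeroes the dual on pathological inputs while repairing orthogonality by averaging a small correction. Tracking the correlation loss and degree blowup through this step, and balancing them against the outer Chebyshev parameters $k$ and $m$ separately for each of the three functions, is where I expect most of the technical work to concentrate.
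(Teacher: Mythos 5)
Your plan takes a genuinely different — and substantially more intricate — route than the paper's, and as written it leaves a real gap. The paper never builds an explicit vanishing-error dual polynomial for $\AND$, never opens up the internal structure of the bounded-error duals for $\ED$, $\SURJ$, or $\PTP$, and never performs the ``zeroing out and repairing'' step you identify as the crux. The idea you are missing is a \emph{self-reduction}: each of $\ED_n$, $\SURJ_{n,r}$, and $\PTP_{n,\alpha}$ contains $\AND_k\circ f_{\lfloor n/k\rfloor}$ (or a promise variant thereof) as a subfunction, via restriction to block-diagonal matrices. Coupled with a three-line tensoring lemma (Proposition~\ref{prop:tensor-deg-eps}) — if $\psi$ is an $\epsilon$-error dual witness for $f$, then $\psi^{\otimes k}$ is an $\epsilon^k$-error dual witness for $\AND_k\circ f$ of orthogonal content $k\cdot\orth\psi$, because $\langle f^{\otimes k},\psi^{\otimes k}\rangle=\langle f,\psi\rangle^k$, $\|\psi^{\otimes k}\|_1=\|\psi\|_1^k$, and $\orth\psi^{\otimes k}=k\,\orth\psi$ — this immediately gives $\deg_{3^{-k}}(f_n)\geq k\,\deg_{1/3}(f_{\lfloor n/k\rfloor})$, and the three exponents fall out from the known bounded-error bounds $\Theta(n^{2/3})$, $\tilde\Theta(n^{3/4})$, $\Theta(n^{1/3})$ by optimizing over $k$. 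The prior dual witnesses are used only as an existence statement through Fact~\ref{fact:duality}; no structural knowledge of them is needed, which is exactly what makes the argument short and uniform across the three functions. Even the $\AND$ warmup is done this way, via $\AND_{n}=\AND_k\circ\AND_{\lfloor n/k\rfloor}$ rather than an explicit Chebyshev-type dual.

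Two concrete problems with your route as stated. First, for the promise problem $\PTP_{n,\alpha}$ you cannot compose with an ordinary $\AND_k$ outer dual, because negative instances of $\AND_k\circ\PTP_{n/k,\beta}$ occur at arbitrary outer Hamming weights whereas $\PTP_{n,\alpha}$ only admits weights $k$ or at most $\alpha k$. The paper needs a promise version of the tensoring lemma (Proposition~\ref{prop:tensor-deg-eps-restriction}), whose dual witness carries an extra polynomial factor $\prod_{i=\ell+1}^{k-1}(f(x_1)+\cdots+f(x_k)-i)$ that annihilates the forbidden weights, at a controlled cost of $\binom{k-1}{\ell}$ in the error parameter; nothing in your plan accounts for this. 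Second, you defer the hard work to an unspecified post-processing step that ``zeroes the dual on pathological inputs while repairing orthogonality.'' This is precisely where dual-block-composition arguments for $\SURJ$ become delicate even at bounded error, and swapping in a higher-degree vanishing-error outer dual changes the quantitative tradeoffs of that step in ways you do not track. Without a concrete plan there, the claimed exponents do not follow; the self-reduction route avoids the difficulty entirely.
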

\noindent
This theorem is optimal with respect to all parameters. The lower bounds 
for element distinctness and surjectivity match the
vanishing-error constructions in~\cite{sherstovalgopoly}, whereas the lower
bound for permutation testing is tight by a quantum
query argument which we include as
Theorem~\ref{thm:ptp-qquery-upper}.
A comment is in order on $\epsilon$-approximate
degree in the complementary range,
$\epsilon<1/3^n.$
Routine interpolation gives an exact
representation for each of the functions in Theorem~\ref{thm:vanishing} 
as a polynomial of degree at most $n.$
Theorem~\ref{thm:vanishing} shows that this upper bound
is asymptotically tight, settling the
$\epsilon$-approximate degree for $\epsilon<1/3^{n}$ as
well.

We prove a result analogous to
Theorem~\ref{thm:vanishing}
for \emph{$k$-element distinctness} $\kED_n$, a well-studied
generalization of $\ED_n$.
Specifically, we prove that if $\kED_n$ has bounded-error approximate degree $\Omega(n^{\ell})$,
then it has $\epsilon$-approximate degree $\Omega(n^{\ell} \log^{1-\ell}(1/\eps))$.
The best known lower bound on the bounded-error approximate degree of $\kED_n$
is $\tilde\Omega(n^{3/4-1/(2k)})$ \cite{bkt}, so this yields 
\[\deg_\eps(\kED_n)=\tilde\Omega\left(n^{\frac34-\frac1{2k}} \left(\log\frac1\eps\right)^{\frac14+\frac1{2k}}\right).\]
For large $k$, this comes 
close to the best known upper bound~\cite{sherstovalgopoly}:
\[\deg_\eps(\kED_n)=O\left(n^{\frac34-\frac{1}{4(2^k-1)}} \left(\log\frac1\eps\right)^{\frac14+\frac{1}{4(2^k-1)}}\right).\]

Our techniques are quite general, 
and we are confident that they will find other
applications in the area. 
The technical core of our results establishes that for 
any function $f_n$ that  contains $\AND_k \circ f_{\lfloor n/k \rfloor}$ as a subfunction for each $k \leq n$, any bounded-error
approximate degree lower bound for $f_n$
automatically implies a strong lower bound for the $\epsilon$-approximate degree of $f_n$. 
This allows us to prove tight lower bounds on the
vanishing-error approximate degrees of
$\AND_n,$ $\ED_n,$ $\kED_n,$ and $\SURJ_{n,r}$. To handle
$\PTP_{n,\alpha},$ we generalize our technique to other outer
functions.
Our analysis is based on the so-called method of dual polynomials,
whereby one proves approximate degree lower bounds
by constructing explicit dual solutions to a certain linear program
capturing the approximate degree of the given function.

In the remainder of the introduction, we focus on an application of Theorem~\ref{thm:vanishing} to
quantum Merlin--Arthur complexity.

\subsection*{The Merlin--Arthur model}
The Merlin--Arthur (MA) model of query complexity features a function $f$ and two asymmetric players,
Merlin and Arthur.  Arthur's goal is to compute 
$f$ on some unknown input $x$ while querying as few bits of
$x$ as possible. Merlin, who  knows $x$,
can help Arthur compute $f(x)$ by sending him a single witness, i.e.,
an arbitrary message of some bit length $m$. However, Merlin is untrusted. 
The model requires that, for any $x \in f^{-1}(1)$, there is some Merlin message causing Arthur to output 1 with
probability at least $2/3$, and for any $x \in f^{-1}(0)$, no 
Merlin message can cause Arthur to output 1 with
probability more than $1/3$. The cost of the protocol is the sum of the witness length $m$ and 
the number of bits of $x$ queried by Arthur. 
In \emph{quantum} Merlin-Arthur (QMA) query complexity, the witness
sent by Merlin is allowed to be an arbitrary $m$-qubit quantum message,
and Arthur is permitted to query bits of the input $x$ in superposition. 
The MA and QMA query models have important analogues in communication complexity and
Turing machine complexity. In the former setting,
Arthur is replaced by two parties Alice and Bob, and the input $x$
is split between them.

The complexity class $\QMA$ is a quantum analog of $\NP$
and accordingly has received considerable attention.
It is well known that any $\QMA$ protocol can be 
simulated by an $\SBQP \subseteq \PP$ protocol with at most a quadratic blowup in cost,
i.e.,  $\QMA(f) \geq \Omega(\SBQP(f)^{1/2})$
\cite{vyalyi2003qma}.\footnote{An SBQP protocol $\mathcal{A}$
is a quantum protocol for which there is some $\alpha$
such that $\mathcal{A}$ accepts every input
in $f^{-1}(1)$ with probability at least $\alpha$, and every
input in $f^{-1}(0)$ with probability
at most $\alpha/2$ \cite{kuperberg2009hard}.} 
In turn, the existence of an SBQP query protocol 
that makes at most $c$ queries
implies that the \emph{one-sided} approximate degree of $f$ 
is at most $O(c)$. Here, the one-sided $\epsilon$-approximate degree of $f$
is the least degree of a real polynomial $p$
such that $|p(x)| \leq \eps$ for all $x \in f^{-1}(0)$, and $p(x) \geq 1-\eps$ for all $x \in f^{-1}(1)$ \cite{BT15}.
As a consequence, one can prove QMA query lower bounds for $f$ by lower bounding
the one-sided approximate degree of $f$.

Only a handful of additional results are known about QMA query and communication complexity. 
Raz and Shpilka~\cite{razshpilka} showed that $\AND_n$ has QMA query complexity $\Theta(\sqrt{n})$.
Klauck~\cite{klauck} showed that the QMA communication complexity
 of the disjointness problem is $\Omega(n^{1/3})$. 
 Neither of these results follows from a na\" ive application of the bound $\QMA(f) \geq \Omega(\sqrt{\SBQP(f)})$.

\subsection*{QMA complexity of permutation testing}
The permutation testing problem $\PTP_{n,\alpha}$ has played an important role in the study of interactive proof systems because
it possesses a simple non-interactive perfect zero knowledge  ($\NIPZK$)  protocol of logarithmic cost,
yet is a hard problem in many other models. 
Hence, it has been used to prove a variety of complexity class separations.
In particular, Aaronson~\cite{aaronson} showed that 
the QMA query complexity of $\PTP_{n,\alpha}$ is $\Omega(n^{1/6})$, and thereby gave an oracle 
 separating $\NIPZK$ from $\QMA$. 
Bouland et al.~\cite{bchtv} built on Aaronson's result to give an oracle separating non-interactive \emph{statistical} zero knowledge  ($\NISZK$)
from the complexity class $\UPP$, answering a question of Watrous from 2002. 
Gur, Liu, and Rothblum~\cite{gr} showed that the MA query complexity
of $\PTP_{n,\alpha}$ is $\Omega(n^{1/4})$. 
Despite this progress, the precise QMA complexity of $\PTP_{n,\alpha}$ 
has remained open, with the best upper bound being $O(n^{1/3})$ \cite{bht, aaronson} and the best
lower bound being Aaronson's $\Omega(n^{1/6})$. We
obtain a polynomially stronger lower bound.

\begin{thm}
\label{thm:qma}
Let $0<\alpha<1$ be an arbitrary constant. Then any QMA query 
protocol for $\PTP_{n,\alpha}$ with 
witness length $m$ has query cost $\Omega(n/m)^{1/3}$.
In particular, $\PTP_{n,\alpha}$ has QMA complexity $\Omega(n^{1/4}).$
\end{thm}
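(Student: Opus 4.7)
The plan is to combine the new vanishing-error bound on $\deg_\eps(\PTP_{n,\alpha})$ supplied by Theorem~\ref{thm:vanishing} with a reduction from QMA query protocols to one-sided $\eps$-approximators. At a high level, given a QMA protocol with witness length $m$ and query cost $T$, I will extract a polynomial of degree $O(T\,k)$ that one-sidedly approximates $\PTP_{n,\alpha}$ with error $\eps = 2^{m-k}$, then invoke Theorem~\ref{thm:vanishing} with an optimized choice of $k$ (equivalently, of $\eps$) to deduce $T = \Omega((n/m)^{1/3})$. The $\Omega(n^{1/4})$ bound for unrestricted QMA complexity will then follow by minimizing $m + T$.

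For the QMA-to-polynomial step, I would write the acceptance probability on input $x$ with witness $|\psi\rangle$ as $\langle \psi |\, M_x\, | \psi\rangle$, where $M_x$ is a PSD operator on $2^m$ dimensions whose matrix entries are polynomials in $x$ of degree $2T$. Completeness gives $\lambda_{\max}(M_x) \geq 2/3$ on $\PTP_{n,\alpha}^{-1}(1)$; soundness gives $\lambda_{\max}(M_x) \leq 1/3$ on $\PTP_{n,\alpha}^{-1}(0)$. I would then apply Marriott--Watrous amplification for $k$ rounds, which keeps the witness length at $m$, multiplies the query cost by $O(k)$, and replaces $M_x$ by an operator $M_x^{(k)}$ whose top eigenvalue is $\geq 1 - 2^{-k}$ on yes inputs and $\leq 2^{-k}$ on no inputs. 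Setting $p(x) := \operatorname{tr}(M_x^{(k)})$ yields a polynomial of degree $O(T k)$ satisfying $p(x) \geq 1 - 2^{-k}$ on yes instances and $0 \leq p(x) \leq 2^{m-k}$ on no instances. Taking $k = m + \log(1/\eps)$ makes $p$ a one-sided $\eps$-approximator to $\PTP_{n,\alpha}$ of degree $O\!\bigl(T(m + \log(1/\eps))\bigr)$.

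Invoking the one-sided analog of Theorem~\ref{thm:vanishing}, which should follow from the dual witness used to prove the two-sided bound, then yields
\[
T\,(m + \log(1/\eps)) \;\geq\; \Omega\!\left(n^{1/3}\,\log^{2/3}(1/\eps)\right).
\]
Setting $\log(1/\eps) = m$ equalizes the two summands on the left and gives $T \geq \Omega\!\bigl((n/m)^{1/3}\bigr)$, the first conclusion of the theorem. For the second, the total QMA cost $m + T$ is bounded below by $m + \Omega((n/m)^{1/3})$, which is minimized at $m \asymp T \asymp n^{1/4}$, yielding QMA complexity $\Omega(n^{1/4})$.

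The main obstacle I anticipate is the passage from the two-sided bound of Theorem~\ref{thm:vanishing} to the one-sided analog required here. The standard dual polynomial method produces two-sided lower bounds directly, so one must inspect the explicit dual witness constructed for $\PTP_{n,\alpha}$ and verify that it also certifies a one-sided lower bound---either by exhibiting the right sign pattern on $\PTP_{n,\alpha}^{-1}(0)$ or via a standard transformation between the dual formulations of one-sided and two-sided approximate degree. Once that is in hand, the Marriott--Watrous amplification and the final optimization over $\eps$ are essentially routine.
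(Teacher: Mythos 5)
Your proposal is essentially correct and follows the same route as the paper: Marriott--Watrous amplification turns a witness-length-$m$, query-cost-$T$ QMA protocol into a low-degree polynomial that one-sidedly approximates $\PTP_{n,\alpha}$ with exponentially small error, and the vanishing-error lower bound of Theorem~\ref{thm:vanishing} (specifically Theorem~\ref{ptpthm}) then forces $T = \Omega((n/m)^{1/3})$. Your construction $p(x) = \operatorname{tr}(M_x^{(k)})$ is the paper's ``run the amplified protocol with the totally mixed witness'' step up to a harmless factor of $2^m$, and your optimization over $\eps$ recovers the paper's choice $\eps = 2^{-\Theta(m)}$.

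The one place you stop short is the step you correctly flag as the main obstacle: upgrading the two-sided lower bound on $\deg_\eps(\PTP_{n,\alpha})$ to a one-sided lower bound. Your primary suggestion---inspect the explicit dual witness from the proof of Theorem~\ref{ptpthm} and check it has the sign structure required by the dual LP for one-sided approximate degree (nonnegativity on $\PTP_{n,\alpha}^{-1}(1)$)---would be delicate: the dual object built in Proposition~\ref{prop:tensor-deg-eps-restriction} is a tensor power of a base dual polynomial times a correction factor, and there is no a priori reason the product $\prod_i \psi(x_i)$ should be nonnegative on all-yes inputs. The paper instead proves a clean \emph{primal} equivalence (Proposition~\ref{prop:ptp-onesided-twosided}): given any one-sided approximant $p$, average it over all row and column permutations. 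Because $\PTP_{n,\alpha}^{-1}(1)$ is a single orbit under that group action (it is exactly the set of permutation matrices), the symmetrized $p^*$ is \emph{constant} on yes inputs, so dividing by $\max\{1, M\}$ yields a two-sided approximant of the same degree. This gives $\deg^+_\eps(\PTP_{n,\alpha}) = \deg_\eps(\PTP_{n,\alpha})$ outright, with no need to reason about dual witnesses, and is the ingredient that makes the rest of your argument go through as you describe.
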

\noindent
This result quantitatively matches the MA lower bound of Gur et al.~\cite{gr}
but holds in the more powerful quantum setting.
Theorem~\ref{thm:qma} comes reasonably close
to matching the known QMA query upper bound of $O(n^{1/3}),$ which holds even
if Merlin does not send any message to Arthur; see Theorem~\ref{thm:ptp-qquery-upper}.

To prove Theorem~\ref{thm:qma}, 
we derive a connection between QMA query complexity and vanishing-error approximate
degree for a class of functions that includes $\AND_n$, $\ED_n$, and $\PTP_{n,\alpha}$. This connection
amounts to the observation that, for these particular functions, the one-sided $\eps$-approximate
degree is \emph{equal} to the $\eps$-approximate degree.
Prior work on QMA complexity (e.g., \cite{klauck}) has implicitly exploited a similar observation
in the special case of $\AND_n$. 
Our analysis substantially generalizes the insights of prior work,
and makes explicit the key phenomenon at play, namely the equivalence of one-sided vs.~standard approximate degree
for these functions.
Combining this connection with our new vanishing-error approximate
degree lower bounds in Theorem~\ref{thm:vanishing} establishes Theorem~\ref{thm:qma}.

\section{Preliminaries}

\label{s:defs}

For a function $f,$ we let $\dom f$ and $\im f$ stand for the
domain and image of $f,$ respectively.
We view Boolean functions as mappings $f\colon X\to\zoo$
for a finite set $X.$
For functions 
$g \colon X \to Y$ and $f \colon Y^n \to Z$,
we let $f \circ g$ denote the block-composition
of $f$ and $g$. In more detail, $f \circ g\colon
X^n\to Z$ is the function that maps $(x_1, \dots, x_n) \in X^n$
to $f(g(x_1), \dots, g(x_n))$. We generalize
block-composition to the case when the domain of $f$ is
properly contained in $Y^n$ by defining the domain of $f\circ g$ 
as the set of $(x_1,\ldots,x_n)\in X^n$ such that
$(g(x_1),\ldots,g(x_n))\in \dom f.$

\subsection{Polynomial approximation}
For a multivariate real polynomial $p\colon\Re^{n}\to\Re$, we let
$\deg p$ denote the total degree of $p$, i.e., the largest degree
of any monomial of $p.$ It will be convenient to define
the degree of the zero polynomial by $\deg0=-\infty.$ For a real-valued
function $\phi$ supported on a finite subset of $\Re^{n}$, we define
the \emph{orthogonal content of $\phi,$} denoted $\orth\phi$, to
be the minimum degree of a real polynomial $p$ for which $\langle\phi,p\rangle\ne0.$
We adopt the convention that $\orth\phi=\infty$ if no such polynomial
exists. 
For two functions $f, \psi \colon X \to \Re$,
let $\langle f, \psi \rangle = \sum_{ x \in X} f(x) \psi(x)$ denote
the correlation of $f$ and $\psi$, and let $\|\psi\|_1 = \sum_{x \in X} |\psi(x)|$.
For any real-valued
function $\psi \colon X \to \Re$, its $k$-th tensor
power $\psi^{\otimes k}\colon X^k \to \Re$ is
given by $\psi^{\otimes k}(x_1, \dots, x_k) = \psi(x_1) \cdots \psi(x_k)$.

The \emph{$\eps$-approximate degree} of a function $f \colon X \to \Re$, denoted
$\deg_{\eps}(f)$, is the least degree of a polynomial $p \colon X \to \Re$
such that $|p(x)-f(x)| \leq \eps$ for all $x \in X$.
We emphasize that no 
restriction is placed on the behavior of $p$ at inputs
outside $f$'s domain of definition, $X.$ For most functions
of interest to us, the domain $X$ is a proper subset
of $\zoon$ and thus their approximating polynomials may take on
arbitrary values on $\zoon\setminus X.$
The following dual characterization of approximate
degree is well known and can be verified using 
linear programming duality.

\begin{fact} \label{fact:duality}
Fix $d>0$ and a function $f \colon X \to \Re$. Then
$\deg_{\eps}(f) \geq d$ if and only if
there exists a function $\psi \colon X \to \Re$
such that 
\begin{align*}
&\langle f,\psi\rangle>\epsilon\|\psi\|_{1},\\
&\orth\psi \geq d.
\end{align*}
\end{fact}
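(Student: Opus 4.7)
The plan is to split the equivalence into its two directions: a short direct argument for the ``if'' direction using orthogonality, and a finite-dimensional LP-duality argument for the ``only if'' direction.

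For the ``if'' direction, suppose $\psi$ is given with $\orth\psi \geq d$ and $\langle f,\psi\rangle > \eps\|\psi\|_1$. For any polynomial $p$ of degree less than $d$, I would observe that $\langle p,\psi\rangle = 0$ by the definition of $\orth\psi$, whence $\langle f-p,\psi\rangle = \langle f,\psi\rangle > \eps\|\psi\|_1$. Combined with the pointwise bound $|\langle f-p,\psi\rangle| \leq \max_{x\in X}|f(x)-p(x)| \cdot \|\psi\|_1$, this forces $\max_x |f(x) - p(x)| > \eps$, and therefore $\deg_\eps(f) \geq d$.

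For the ``only if'' direction, I would write the approximate-degree question as the finite-dimensional LP that minimizes $t$ over real $t$ and polynomial coefficients $\{c_\alpha\}_{|\alpha|<d}$ subject to $p(x)-f(x) \leq t$ and $f(x)-p(x) \leq t$ for all $x\in X$. Its optimum $\eps^*$ satisfies $\deg_\eps(f) \geq d$ iff $\eps^* > \eps$. Introducing nonnegative multipliers $\mu^+(x), \mu^-(x)$ for the two constraint families and computing the Lagrangian dual, stationarity in $t$ yields $\sum_x(\mu^+(x)+\mu^-(x))=1$, and stationarity in each $c_\alpha$ yields $\sum_x(\mu^+(x)-\mu^-(x))\,x^\alpha=0$ for every $|\alpha|<d$. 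Setting $\psi := \mu^- - \mu^+$ turns the first condition into $\|\psi\|_1 \leq 1$, the second into $\orth\psi \geq d$, and the dual objective into $\langle f,\psi\rangle$; the reverse correspondence recovers $\mu^\pm$ from the positive and negative parts of $\psi$. Strong LP duality then equates $\eps^*$ with $\max\{\langle f,\psi\rangle : \|\psi\|_1 \leq 1,\ \orth\psi \geq d\}$, and this maximum is attained because the feasible set is a compact polytope in a finite-dimensional space. If $\eps^* > \eps$, any attaining $\psi^*$ satisfies $\langle f,\psi^*\rangle > \eps \geq \eps\|\psi^*\|_1$, yielding the required witness.

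There is no substantial obstacle here: the result is standard textbook LP duality. The only two moments requiring brief care are the equivalence between the constraints on $\mu^\pm$ and the cleaner bound $\|\psi\|_1 \leq 1$ on $\psi$ (handled by passing to positive and negative parts, padding any slack to reach the normalization), and the passage from the non-strict feasibility $\|\psi^*\|_1 \leq 1$ to the strict conclusion $\langle f,\psi^*\rangle > \eps\|\psi^*\|_1$, which is immediate since any optimizer must satisfy $\langle f,\psi^*\rangle > 0$.
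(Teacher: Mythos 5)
Your argument is correct and is exactly the linear-programming-duality verification that the paper gestures at without writing out: the paper simply states that Fact~\ref{fact:duality} "is well known and can be verified using linear programming duality." Both directions check out, including the two points you flag at the end---the passage between the nonnegative multipliers $\mu^{\pm}$ and the $\ell_1$-normalized $\psi$, and the step from $\|\psi^*\|_1\leq 1$ together with $\epsilon^*>\epsilon\geq 0$ to the strict inequality $\langle f,\psi^*\rangle>\epsilon\|\psi^*\|_1$.
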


The simplest function of interest to us is 
$\AND_n \colon \zoo^n \to \zoo,$ given as usual by 
$\AND_n(x)=\bigwedge_{i=1}^n x_i.$ 
Its bounded-error approximate degree 
was determined by
Nisan and Szegedy~\cite{NS94}, as follows.
\begin{thm}
\label{thm:and-bounded}
For all $n\geq1,$
\[\deg_{1/3}(\AND_n)=\Theta(\sqrt n).\]
\end{thm}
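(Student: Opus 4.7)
The plan is to prove the matching upper and lower bounds separately by classical techniques.

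For the upper bound $\deg_{1/3}(\AND_n) = O(\sqrt{n})$, I would exploit symmetry: since $\AND_n$ is symmetric, it suffices to construct a univariate polynomial $q$ of degree $O(\sqrt{n})$ satisfying $|q(t) - [t = n]| \leq 1/3$ for every integer $t \in \{0, 1, \ldots, n\}$, and then set $p(x) = q(x_1 + \cdots + x_n)$. The natural construction is to take a suitably shifted and rescaled Chebyshev polynomial $T_d$ of degree $d = \lceil c\sqrt{n}\rceil$: concretely, let $q(t) = T_d\bigl(1 + \tfrac{2(t-n)}{n}\bigr) / T_d(1 + \tfrac{2}{n})$. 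Standard Chebyshev growth estimates give $T_d(1 + 2/n) = \Theta(\mathrm{e}^{2d/\sqrt{n}})$ while $|T_d(y)| \leq 1$ for $y \in [-1,1]$, which ensures $q(n) = 1$ and $|q(t)| \leq 1/3$ for $t \in \{0, \ldots, n-1\}$ provided the constant $c$ is chosen large enough.

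For the lower bound $\deg_{1/3}(\AND_n) = \Omega(\sqrt{n})$, I would use Minsky--Papert symmetrization. Suppose $p \colon \zoo^n \to \Re$ approximates $\AND_n$ pointwise to error $1/3$ and has degree $D$. Define
\[
q(t) = \Exp_{|x| = t}[p(x)] \qquad (t \in \{0,1,\ldots,n\}),
\]
where the expectation is uniform over the Boolean cube restricted to Hamming weight $t$. A classical averaging argument shows that $q$ agrees with a univariate polynomial of degree at most $D$, and it continues to satisfy $|q(t) - [t = n]| \leq 1/3$ on integer arguments. In particular $q(n) \geq 2/3$ while $q(n-1) \leq 1/3$, and $|q(t)| \leq 4/3$ throughout $[0,n]$ (for integer $t$; one then extends this bound to the real interval using Markov's inequality in a bootstrapping step, or equivalently works with the rescaled polynomial $\tilde q(y) = q(\tfrac{n}{2}(y+1))$ on $[-1,1]$).

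The key analytic input is Markov's inequality: any real polynomial $r$ of degree $D$ with $\|r\|_{[-1,1]} \leq M$ satisfies $\|r'\|_{[-1,1]} \leq D^2 M$. Rescaling to $[0,n]$, the derivative of $q$ at some point is at least a constant (because $q$ jumps by $\geq 1/3$ between consecutive integers $n-1$ and $n$), while its supremum norm on $[0,n]$ is bounded. Combining these gives $D^2 / n \geq \Omega(1)$, i.e., $D = \Omega(\sqrt{n})$, as desired.

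The main obstacle is the step of controlling $\|q\|_{[0,n]}$ by a constant; the pointwise bound $|q(t)| \leq 4/3$ at integer points does not immediately extend to the whole interval. The cleanest fix is to argue in two stages: first apply a crude Markov bound to show $q$ cannot be too large on $[0,n]$ without its derivative being enormous at some integer (contradicting the integer bounds), and then apply Markov again to obtain the final $\sqrt{n}$ lower bound. Alternatively, one can invoke the equivalent Bernstein--Markov inequality directly on the rescaled polynomial, which is a standard black box in this area.
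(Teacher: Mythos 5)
This theorem is not proved in the paper at all: it is imported as a black box from Nisan and Szegedy \cite{NS94} (and then used as the base case of the warmup lower bound for $\deg_\epsilon(\AND_n)$). So there is no internal proof to compare against; your proposal is the classical symmetrization-plus-Chebyshev/Markov argument, which is indeed the right route and essentially the original one.

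Two points in your sketch need repair. First, your Chebyshev formula has an off-by-one error: with $q(t)=T_d\bigl(1+\tfrac{2(t-n)}{n}\bigr)/T_d\bigl(1+\tfrac{2}{n}\bigr)$, the argument at $t=n$ equals $1$, so $q(n)=1/T_d\bigl(1+\tfrac{2}{n}\bigr)\le 1/3$ rather than $1$; in fact your affine map sends all of $\{0,1,\ldots,n\}$ into $[-1,1]$, so this $q$ approximates the identically-zero function, not $\AND_n$. The intended construction must send $\{0,\ldots,n-1\}$ into $[-1,1]$ and let $t=n$ fall outside, e.g.\ $q(t)=T_d\bigl(\tfrac{2t}{n-1}-1\bigr)/T_d\bigl(\tfrac{2n}{n-1}-1\bigr)$, after which $q(n)=1$ and the growth estimate $T_d\bigl(1+\tfrac{2}{n-1}\bigr)=\exp\bigl(\Theta(d/\sqrt{n})\bigr)$ gives $|q(t)|\le 1/3$ on $\{0,\ldots,n-1\}$ for $d=C\sqrt n$. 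Second, in the lower bound you correctly identify the real issue (the bound $|q(t)|\le 4/3$ is only at integers), but your proposed fix is stated loosely: the standard bootstrap does not contradict a ``derivative enormous at some integer,'' and Bernstein--Markov alone does not resolve the integer-versus-interval gap. The correct two-step argument is: let $M=\max_{[0,n]}|q|$, attained at $x_0$, whose nearest integer is within $1/2$; Markov rescaled to $[0,n]$ gives $M\le \tfrac43+\tfrac{D^2}{n}M$, so either $D\ge\sqrt{n/2}$ (and you are done) or $M=O(1)$; in the latter case the jump $q(n)-q(n-1)\ge 1/3$ gives a point $\xi$ with $|q'(\xi)|\ge 1/3$, and a second application of Markov yields $D=\Omega(\sqrt n)$. (Alternatively, cite the Ehlich--Zeller/Rivlin--Cheney bound for polynomials that are small at integer points, which is how Nisan and Szegedy argue.) With these two corrections your proof goes through.
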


\subsection{Surjectivity}
Let $\Domain_{n,r}$ stand for the set of Boolean
matrices of size $n\times r$ in which every row has
exactly one $1.$ 
Every matrix $x\in\Domain_{n,r}$ has a natural
interpretation as specifying a mapping
$\phi \colon \{1,2,\dots, n\} \to \{1,2,\dots, r\}$,
where $\phi(i)=j$ if and only if $x_{i,j}=1.$ Our next
three functions are defined on $\Domain_{n,r}$ and can
thus be regarded as ``function properties."
To start with, the \emph{surjectivity problem} with $n$ elements and
range size $r$ is defined as 
$\SURJ_{n,r}\colon \Domain_{n,r}\to\zoo,$
where 
\[\SURJ_{n,r}(x)=\bigwedge_{j=1}^r\bigvee_{i=1}^n
x_{i,j}.\]
Thus, 
$\SURJ_{n,r}$ takes as input an $n\times r$ Boolean matrix in which every row
contains exactly one $1$, and 
evaluates to 1 if and only if every column
of the input contains at least one 1. 
Interpreting the input matrix as a mapping, 
$\SURJ_{n, r}$ evaluates to $1$ if and only if that
mapping is surjective.
This surjectivity property is trivially false for $r>n,$ and the standard
setting of parameters is $r=\lfloor cn\rfloor$ for some constant
$0<c<1.$ The choice of constant $c$ is
unimportant because it affects
$\deg_{1/3}(\SURJ_{n,\lfloor cn\rfloor})$ by at most a
multiplicative constant. It
was shown in~\cite{sherstovalgopoly} that
the surjectivity function has bounded-error approximate
degree $O(n^{3/4}).$ Bun et
al.~\cite{bkt} gave an alternate proof of this
upper bound and additionally proved that it is tight up to
a polylogarithmic factor. We thus have:
\begin{thm}
\label{thm:surj-bounded}
Let $0<c<1$ be an arbitrary constant. Then 
\[\deg_{1/3}(\SURJ_{n,\lfloor cn\rfloor })=\tilde\Theta(n^{3/4}).
\]
\end{thm}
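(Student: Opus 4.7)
The plan is to establish the bound in two parts, proving matching upper and lower bounds up to polylogarithmic factors. For the upper bound, I would construct an approximating polynomial of degree $\tilde O(n^{3/4})$ exploiting the identity $\SURJ_{n,r}(x) = \bigwedge_{j=1}^r \bigvee_{i=1}^n x_{i,j}$ together with the row-sparsity promise defining $\Domain_{n,r}$. A naive block composition of optimal $\AND$ and $\OR$ approximants via Theorem~\ref{thm:and-bounded} only yields degree $O(\sqrt{rn})=O(n)$, so the crucial ingredient is that the promise forces the total Hamming weight of $x$ to equal $n$. Hence at most $n$ of the inner ORs can evaluate to $1$, and one can combine a low-degree polynomial that approximates $\OR_n$ extremely well on low-weight inputs with a carefully chosen outer approximant for $\AND_r$ applied to the resulting (sparse) vector of OR values. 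Concretely, because the number of non-zero OR entries is tightly concentrated, one can approximate the outer $\AND$ by polynomials that need only behave well on a small slice of the Boolean cube, saving a factor of $n^{1/4}$ over the naive bound.

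For the lower bound, I would use the method of dual polynomials (Fact~\ref{fact:duality}) together with a reduction from a block-composed function. The goal is to construct a dual witness $\psi \colon \Domain_{n,r} \to \Re$ satisfying $\langle \SURJ_{n,r},\psi\rangle > \tfrac{1}{3}\|\psi\|_1$ and $\orth\psi \geq \tilde\Omega(n^{3/4})$. The natural starting point is a known dual witness for the bounded-error approximate degree of $\AND_k \circ \OR_m$ for parameters with $km \approx n^{3/2}$, obtained by tensoring and combining dual witnesses for $\AND$ and $\OR$ via a dual block-composition theorem. One then lifts this witness to a signed measure on $\zoo^{n \times r}$, and post-processes it so the final support lies inside $\Domain_{n,r}$.

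The main obstacle is this last transition: restricting the lifted dual to the sparse subdomain $\Domain_{n,r}$ while preserving both the correlation against $\SURJ_{n,r}$ and the orthogonality against all polynomials of degree below $\tilde\Omega(n^{3/4})$. The standard remedy is a correction-term construction that zeroes out weight outside the promise without introducing low-degree correlation, or equivalently a hardness-amplification theorem tailored to functions on $\Domain_{n,r}$. The polylogarithmic gap in the theorem statement arises from slight losses in this restriction step, and shaving it further would require a finer accounting of how the dual's $\ell_1$ mass distributes across the non-promise inputs. A final technical step converts a weak dual witness (with small but non-vanishing advantage) into one of error strictly less than $1/3$ via a boosting/tensoring argument, which does not affect the degree asymptotically.
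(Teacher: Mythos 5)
This theorem is not proved in the paper: it is a citation of prior work, with the $O(n^{3/4})$ upper bound due to Sherstov~\cite{sherstovalgopoly} and the $\tilde\Omega(n^{3/4})$ lower bound due to Bun, Kothari, and Thaler~\cite{bkt}. So there is no in-paper proof to compare your sketch against, and what follows compares it against those two references.

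Your outline is in the right spirit but imprecise in ways that matter. On the upper bound, the claim that ``at most $n$ of the inner ORs can evaluate to $1$'' is vacuous since there are only $r<n$ of them, and the ``(sparse) vector of OR values'' is not sparse --- on yes-instances every coordinate of that vector is $1$. What the promise actually supplies is that the $r$ inner $\OR$ gadgets collectively receive an input of total Hamming weight exactly $n$, so that most columns are very light; the savings come from inner approximants whose accuracy scales with column weight, balanced against the robustness requirement of the outer polynomial. The $n^{3/4}$ exponent falls out of that degree/accuracy optimization, not from restricting the outer $\AND$ to a slice. On the lower bound, the step you gesture at --- transferring a dual witness for $\AND\circ\OR$ from $\zoo^{nr}$ to the promise domain $\Domain_{n,r}$ without destroying either the high pure degree or the correlation --- is precisely the hard part and is not a routine ``correction-term construction.'' It was the central technical contribution of~\cite{bkt}, which builds an AND-OR dual polynomial specially concentrated on low-Hamming-weight inputs so that the off-promise mass can be zeroed out with controlled damage. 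Also, the final ``boosting/tensoring'' step you add to push a weak dual to error below $1/3$ is neither needed nor part of the actual argument; the dual is constructed to have correlation greater than $\tfrac13\|\psi\|_1$ from the start.
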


\subsection{Element distinctness}
Another well-studied function is \emph{element
distinctness} $\ED_{n,r}\colon\Domain_{n,r}\to\zoo,$
defined by
$\ED_{n,r}(x)=1$ if and only if every column of the
input matrix $x$ has at most one $1.$ Switching to the
interpretation of $x$ as a mapping, $\ED_{n,r}(x)$ 
evaluates to true if and only if the mapping is
one-to-one. This property is trivially false for
$r<n.$ In the complementary case,
Ambainis~\cite{ambainis05collision} proved that for any given $\epsilon,$ the
$\epsilon$-approximate degree of $\ED_{n,r}$ is the
same for all $r\geq n.$ This means that one may without
loss of generality focus on the special case $r=n,$
with the shorthand notation $\ED_{n}=\ED_{n,n}$.
Aaronson and Shi~\cite{aaronsonshi},
Ambainis~\cite{ambainis05collision},
and Kutin~\cite{kutin05collision} showed 
that $\ED_{n}$ has bounded-error approximate degree
$\Theta(n^{2/3}).$
\begin{thm}
\label{thm:ed-bounded}
For all $n\geq1,$
\[\deg_{1/3}(\ED_n)=\Theta(n^{2/3}).
\]
\end{thm}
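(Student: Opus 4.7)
The plan is to prove the upper and lower bounds separately by well-established but quite different techniques.

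\emph{Upper bound.} The approach is to leverage Ambainis's $O(n^{2/3})$-query quantum walk algorithm for element distinctness and then convert it into a polynomial approximation via the classical observation of Beals, Buhrman, Cleve, Mosca, and de Wolf that the acceptance probability of any $T$-query quantum algorithm is a multilinear polynomial of degree at most $2T$ in the input bits. This immediately yields a polynomial of degree $O(n^{2/3})$ that approximates $\ED_n$ to error at most $1/3$ on $\Domain_{n,n}$.

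\emph{Lower bound.} Let $p$ be any degree-$d$ polynomial satisfying $|p(x) - \ED_n(x)| \leq 1/3$ on $\Domain_{n,n}$. For each integer $g \in \{0, 1, \ldots, \lfloor n/2\rfloor\}$, let $\mathcal{D}_g$ be the uniform distribution over matrices $x \in \Domain_{n,n}$ whose induced mapping $\phi\colon [n]\to[n]$ has exactly $g$ range values of multiplicity~$2$, exactly $n-2g$ range values of multiplicity~$1$, and the remaining range values unused. Define
\[
P(g) = \Exp_{x\sim\mathcal{D}_g}[p(x)].
\]
By averaging $p$ over all row permutations and, independently, over all column permutations of its matrix argument, one verifies that $P$ agrees on its domain with a univariate polynomial of degree at most $d$. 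Since $\ED_n$ equals $1$ on $\mathcal{D}_0$ and $0$ on $\mathcal{D}_g$ for each $g\geq 1$, we obtain $P(0) \geq 2/3$ and $|P(g)|\leq 1/3$ for every $g \geq 1$, with $|P(g)| \leq 4/3$ throughout $\{0,1,\ldots,\lfloor n/2\rfloor\}$.

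The remaining step is a purely analytic lower bound: a univariate polynomial with these properties over $\{0, 1, \ldots, \lfloor n/2 \rfloor\}$ must have degree $\Omega(n^{2/3})$. The established route, pioneered by Aaronson--Shi and later sharpened by Kutin and Ambainis, is a Markov--Bernstein inequality applied after a carefully chosen rescaling that converts the question into one of approximating a sharp threshold on a much finer effective grid, together with Paturi-style estimates.

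\emph{Main obstacle.} The chief technical hurdle is securing the tight exponent $2/3$ in the final analytic step: a direct application of Markov's inequality or Paturi's theorem yields only the weaker bound $\Omega(n^{1/2})$, and the extra factor of $n^{1/6}$ is won by exploiting that only the narrow region near $g = 0$ must be resolved while the polynomial is simultaneously controlled over the full range $[0, n/2]$. A secondary difficulty is the symmetrization step itself: because $\Domain_{n,n}$ is not a product set, the standard Minsky--Papert symmetrization does not apply, and one must verify that the combined row-and-column averaging produces a genuine univariate polynomial of degree at most $d$ in $g$ alone rather than a multivariate polynomial in the entire image profile of $\phi$.
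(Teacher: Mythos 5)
The paper cites Theorem~\ref{thm:ed-bounded} from Aaronson--Shi, Ambainis, and Kutin and does not prove it, so there is no internal argument to compare against; judging the proposal on its own terms, your upper bound (Ambainis's $O(n^{2/3})$-query walk plus the Beals et al.\ acceptance-probability-to-polynomial conversion) is correct, but the lower bound sketch has a genuine quantitative gap. The symmetrization you describe compresses all available information into a single univariate polynomial $P$ on $\{0,1,\dots,\lfloor n/2\rfloor\}$ satisfying $P(0)\geq 2/3$, $|P(g)|\leq 1/3$ for $g\geq 1$, and $|P(g)|\leq 4/3$ throughout. These are exactly the constraints governing an approximant to $\OR_{\lfloor n/2\rfloor}$, and a shifted, rescaled Chebyshev polynomial of degree $O(\sqrt n)$ meets all of them. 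Consequently there is no ``purely analytic'' final step that can extract $\Omega(n^{2/3})$ from $P$ alone: the sharpest conclusion available from the stated constraints is $\Omega(\sqrt n)$, and it is tight. Your closing intuition---that the extra $n^{1/6}$ is ``won by exploiting that only the narrow region near $g=0$ must be resolved while the polynomial is simultaneously controlled over the full range $[0,n/2]$''---is precisely the Markov/Nisan--Szegedy phenomenon that makes $\Theta(\sqrt n)$ the right answer for a boundary threshold on a length-$\Theta(n)$ grid; it is not a source of additional gain.

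The published proofs reach $n^{2/3}$ by a different route. They do not symmetrize element distinctness directly to a single parameter. Instead they first establish a lower bound of $\Omega(m^{1/3})$ for the collision problem on $m$ inputs, using a richer symmetrization over $T$-to-$1$ inputs for an entire range of $T$ (together with a bivariate or multi-parameter polynomial analysis to get past the $\sqrt{m}$ barrier), and they then reduce $\ED_n$ to collision on $m=\Theta(n^2)$ inputs via a birthday-paradox subsampling argument; the exponent $n^{2/3}=(n^2)^{1/3}$ arises from composing these two steps. Your single-parameter family $\mathcal D_g$ simply does not retain enough of the multiplicity profile of the input to reproduce this. So the missing idea is the detour through the collision problem (or, equivalently, a multi-parameter symmetrization that tracks more than the count of doubled range values), and without it the proposed analytic endgame is attempting to prove something that is false.
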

\noindent
Element distinctness generalizes in a natural way to 
a function called \emph{$k$-element distinctness},
denoted $\kED_{n}\colon\Domain_{n,n}\to\zoo$.
This new function evaluates to true if and only if
the input matrix has no column with $k$ or more 1s.
Viewing the input as a mapping, $\kED_{n}$ evaluates to true if
and only if no range element occurs $k$ or more times.
With these definitions, we have $\ED_n=\ED_n^2.$

\subsection{Permutation testing}
The final problem of interest to us
is a restriction of element distinctness $\ED_n$.
In more detail, fix an integer $n\geq1$ and a real
number $0<\alpha<1$.
The domain of the \emph{permutation testing problem} $\PTP_{n,\alpha}$ is the set of all 
matrices $x\in\Domain_{n,n}$ in which 
the number of columns containing a
$1$ is either exactly $n$ 
or at most $\alpha n.$ The function
evaluates to true in the former case and to false in
the latter. Equivalently, $\PTP_{n,\alpha}(x)=1$ if and only if $x$
is a permutation matrix. In the regime of interest to
us, $0<\alpha<1$ is a constant independent of $n.$ 

The permutation testing problem was introduced by
Aaronson~\cite{aaronson}, who defined it somewhat
differently. In his variant of permutation testing,
which we denote by $\PTP^*_{n,\delta},$ one
is given a matrix $x\in\Domain_{n,n}$ that is either
(i)~a permutation matrix, or (ii)~disagrees from every
permutation matrix in at least $\delta n$ 
rows. The function evaluates to true in case~(i) and to
false in case~(ii). 
As the following proposition shows, Aaronson's
$\PTP^*_{n,\delta}$ is precisely the same function as
our $\PTP_{n,1-\delta}.$

\begin{prop}
\label{prop:ptp-ptp}
Let $0<\delta<1$ and $n\geq1$ be given. Then as functions,
\begin{align*}
\PTP^*_{n,\delta}=\PTP_{n,1-\delta}.
\end{align*}
Specifically, the l.h.s.~and r.h.s.~have the same
domain and agree at every point thereof.
\end{prop}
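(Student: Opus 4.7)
The plan is to unfold both definitions and show that they cut out the same subset of $\Domain_{n,n}$ and agree there. The key quantity is $k(x)$, the number of columns of $x$ containing a $1$. Writing $\phi\colon\{1,\dots,n\}\to\{1,\dots,n\}$ for the mapping encoded by $x$, so that $\phi(i)=j$ iff $x_{i,j}=1$, we have $k(x)=|\im\phi|$. The ``yes'' cases are immediate: $x$ is a permutation matrix exactly when $\phi$ is a bijection, which (because every row already has exactly one $1$) is equivalent to every column containing a $1$, i.e.\ $k(x)=n$. So case (i) of $\PTP^*_{n,\delta}$ is literally the first case of $\PTP_{n,1-\delta}$, and both functions evaluate to $1$ on these inputs.

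The heart of the argument is to translate the row-Hamming condition in case (ii) into a column-count condition. For any permutation matrix $P$ corresponding to a bijection $\sigma$, the number of rows on which $x$ agrees with $P$ equals $|\{i : \phi(i)=\sigma(i)\}|$. The plan is to show that
\[
\max_{\sigma \in S_n}\,|\{i : \phi(i)=\sigma(i)\}|=k(x),
\]
so that the minimum row disagreement is $n-k(x)$. The upper bound $\leq k(x)$ comes from injectivity of $\sigma$: for each $j\in\im\phi$, at most one $i$ can satisfy $\sigma(i)=j$, so the agreement count is at most $|\im\phi|=k(x)$. Matching this, pick for each $j\in\im\phi$ a representative $i_j\in\phi^{-1}(j)$ and set $\sigma(i_j)=j$; the remaining $n-k(x)$ indices not chosen as some $i_j$ biject onto $\{1,\dots,n\}\setminus\im\phi$ (both sets have size $n-k(x)$), extending $\sigma$ to a bijection that achieves $k(x)$ agreements.

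With this in hand, the condition that $x$ disagrees from every permutation matrix in at least $\delta n$ rows becomes $n-k(x)\geq\delta n$, i.e.\ $k(x)\leq(1-\delta)n$, which is exactly the second case in the definition of $\PTP_{n,1-\delta}$. Hence the two domains coincide and both functions evaluate to $0$ on this part, completing the proof. The only nontrivial step is the maximum-agreement computation above, but it is essentially a one-line matching argument, so no real obstacle is expected.
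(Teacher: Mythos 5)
Your proof is correct and takes essentially the same route as the paper: interpret the input matrix as a mapping $\phi$, and show the minimum row-disagreement with any permutation matrix equals $n - |\im\phi|$, which translates the case-(ii) condition of $\PTP^*_{n,\delta}$ into the column-count condition defining $\PTP_{n,1-\delta}$. You simply spell out the max-agreement argument in slightly more detail than the paper, which calls it ``a moment's reflection.''
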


\begin{proof}
This claim is easiest to verify by interpreting an
input $x\in\Domain_{n,n}$ as a mapping
$\phi\colon\{1,2,\ldots,n\}\to\{1,2,\ldots,n\}.$
A moment's reflection shows that $\phi$ disagrees from
every permutation $\{1,2,\ldots,n\}\to\{1,2,\ldots,n\}$
in at least $n-|\im \phi|$ points, and there is a
permutation that achieves this lower bound. Restating
this in matrix terminology, 
a matrix $x\in\Domain_{n,n}$ disagrees from every
permutation matrix in at least $\delta n$ rows if and
only if the number of columns of $x$ containing a $1$
is at most $n-\delta n.$
\end{proof}

By adapting earlier analyses of element distinctness,
Aaronson~\cite{aaronson} obtained the following result.
\begin{thm}
\label{thm:ptp-bounded}
Let $0<\delta<1$ be an arbitrary constant. Then 
\[\deg_{1/3}(\PTP^*_{n,\delta})=\Omega(n^{1/3}).
\]
\end{thm}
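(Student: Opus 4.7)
The plan is to prove Theorem~\ref{thm:ptp-bounded} via a promise restriction from the classical \emph{$k$-to-one collision problem}. By Proposition~\ref{prop:ptp-ptp}, it suffices to show $\deg_{1/3}(\PTP_{n,\alpha})=\Omega(n^{1/3})$ where $\alpha:=1-\delta\in(0,1)$ is a constant. Set $k=\lceil 1/\alpha\rceil$, a positive integer independent of $n$; by replacing $n$ with the largest multiple of $k$ at most $n$, we may freely assume $k\mid n$, as this shifts $n$ only by an additive $O(1)$.

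First I would carry out the reduction. The $k$-to-one collision problem has as its domain the set of matrices $x\in\Domain_{n,n}$ encoding a mapping $\{1,\ldots,n\}\to\{1,\ldots,n\}$ that is either a permutation or exactly $k$-to-one, and it outputs $1$ in the former case and $0$ in the latter. Because a $k$-to-one mapping has image of size $n/k\leq \alpha n$, every such input also lies in $\dom \PTP_{n,\alpha}$ and the two functions agree wherever both are defined. Hence any polynomial that $1/3$-approximates $\PTP_{n,\alpha}$ on its full domain automatically $1/3$-approximates the $k$-to-one collision problem on the smaller domain, giving
\[
\deg_{1/3}(\PTP^*_{n,\delta})\;=\;\deg_{1/3}(\PTP_{n,\alpha})\;\geq\;\deg_{1/3}(k\text{-to-one collision}).
\]

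The proof then concludes by invoking, as a black box, the $\Omega(n^{1/3})$ bounded-error approximate-degree lower bound for the $k$-to-one collision problem---proved by Aaronson and Shi~\cite{aaronsonshi} for $k=2$ and extended to arbitrary constant $k$ by Ambainis~\cite{ambainis05collision} and Kutin~\cite{kutin05collision}. Those arguments symmetrize any putative approximating polynomial to a univariate polynomial that must sharply separate the ``collision counts'' $t=n$ and $t=n/k$ (the image sizes of a permutation and of a $k$-to-one map, respectively) while remaining bounded on all admissible intermediate values, and then apply a Markov--Bernstein-type extremal inequality to force the degree to be $\Omega(n^{1/3})$. This univariate extremal estimate is the technical heart of the result and is where the real obstacle in any from-scratch proof would lie; the reduction step above, by contrast, is entirely elementary.
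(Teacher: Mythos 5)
Your proof is correct and follows the same route the paper relies on. Theorem~\ref{thm:ptp-bounded} is not actually proved in the paper: it is quoted from Aaronson~\cite{aaronson}, whose argument restricts an approximant to the sub-promise ``permutation versus $2$-to-$1$ function'' and invokes the small-range collision lower bound, with the paper merely remarking that Aaronson's $\delta=1/8$ proof ``allows any $0<\delta<1$.'' Your reduction to $k$-to-$1$ collision with $k=\lceil 1/(1-\delta)\rceil$ is exactly the observation needed to substantiate that remark, since a $2$-to-$1$ function has image of size $n/2$ and therefore fits the domain of $\PTP_{n,1-\delta}$ only when $\delta\le 1/2$. One small attribution point: the contribution of Ambainis~\cite{ambainis05collision} and Kutin~\cite{kutin05collision} over Aaronson--Shi~\cite{aaronsonshi} was removing the large-range hypothesis (essential here, since the range is exactly $n$), not the passage from $2$-to-$1$ to general $k$-to-$1$ --- Aaronson--Shi already treat general $k$ --- so you should credit the small-range strengthening, not the $k$-generalization, to those two papers.
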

\noindent
This result is stated in~\cite{aaronson} specifically for
$\delta=1/8,$ but the proof actually allows any $0<\delta<1.$
Combining this theorem
with~Proposition~\ref{prop:ptp-ptp} gives the following
corollary.
\begin{cor}\label{cor:ptp-bounded}
Let $0<\alpha<1$ be an arbitrary constant. Then 
\[\deg_{1/3}(\PTP_{n,\alpha})=\Omega(n^{1/3}).
\]
\end{cor}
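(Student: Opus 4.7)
The plan is to view this as a direct combination of Theorem~\ref{thm:ptp-bounded} with the identification between the two variants of permutation testing established in Proposition~\ref{prop:ptp-ptp}. The only real work is a change of parameter.

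Specifically, given an arbitrary constant $0<\alpha<1,$ I set $\delta=1-\alpha,$ which is also a constant strictly between $0$ and $1.$ By Proposition~\ref{prop:ptp-ptp}, the functions $\PTP_{n,\alpha}$ and $\PTP^*_{n,\delta}$ coincide: they share the same domain, and agree pointwise on it. Since approximate degree is defined purely in terms of the domain and the values taken there, this equality of functions immediately yields $\deg_{1/3}(\PTP_{n,\alpha})=\deg_{1/3}(\PTP^*_{n,\delta}).$

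Now I invoke Theorem~\ref{thm:ptp-bounded}, applied with the constant $\delta=1-\alpha,$ to conclude that $\deg_{1/3}(\PTP^*_{n,\delta})=\Omega(n^{1/3}),$ and chain this with the previous equality to obtain the claimed bound $\deg_{1/3}(\PTP_{n,\alpha})=\Omega(n^{1/3}).$ There is no substantive obstacle here: the content of the corollary is entirely in Theorem~\ref{thm:ptp-bounded} and Proposition~\ref{prop:ptp-ptp}; the corollary just restates the lower bound in the parameterization used throughout this paper.
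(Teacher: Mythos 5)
Your proposal is correct and is exactly the paper's argument: the corollary is obtained by combining Theorem~\ref{thm:ptp-bounded} (applied with $\delta=1-\alpha$) with the identification $\PTP^*_{n,\delta}=\PTP_{n,1-\delta}$ from Proposition~\ref{prop:ptp-ptp}. Nothing further is needed.
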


We close this section with a remark on input
encoding. In this work, functions like $\SURJ_{n, r}$
take as input a Boolean matrix $x$
in which every row has exactly one~1. 
Some other works~\cite{beame, bkt} represent the
input as a list $y_1, \dots, y_n\in\zoo^{\lceil \log
r\rceil},$ 
where $y_i$ 
encodes the location of the unique 1 in the
$i$-th row of the matrix representation $x.$ 
Switching to this alternate representation affects the approximate degree 
by at most a logarithmic factor. See~\cite{sherstovalgopoly} for
a detailed treatment of the relationship between these
representations.

\section{Approximate Degree Lower Bounds}

In this section, we study the vanishing-error
approximate degree of element distinctness,
surjectivity, and permutation testing, and in
particular settle Theorem~\ref{thm:vanishing} from the introduction.
The core of our technique is the following auxiliary
result.

\begin{prop}
\label{prop:tensor-deg-eps}For any $\epsilon\geq0$ and any function
$f\colon X\to\Re$ on a finite subset $X$ of Euclidean space,
\begin{align*}
\deg_{\epsilon^{k}}(f^{\otimes k}) & \geq k\deg_{\epsilon}(f), &  & k=1,2,3,\ldots.
\end{align*}
In particular, every function $f\colon X\to\zoo$ satisfies
\begin{align*}
\deg_{\epsilon^{k}}(\AND_{k}\circ f) & \geq k\deg_{\epsilon}(f), &  & k=1,2,3,\ldots.
\end{align*}
\end{prop}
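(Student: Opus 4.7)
The plan is to use the dual characterization (Fact~\ref{fact:duality}): starting from a dual witness for $f$, I will form its tensor power and check that it serves as a dual witness for $f^{\otimes k}$ at error $\epsilon^k$ and degree $k\deg_\epsilon(f)$. Concretely, set $d=\deg_\epsilon(f)$. The cases $d\leq 0$ are trivial, so assume $d\geq 1$. Fact~\ref{fact:duality} supplies $\psi\colon X\to\Re$ with $\langle f,\psi\rangle>\epsilon\|\psi\|_1$ and $\orth\psi\geq d$, and my candidate witness for $f^{\otimes k}$ is $\psi^{\otimes k}\colon X^k\to\Re$. By Fact~\ref{fact:duality} applied in the reverse direction, it suffices to verify
\[
\langle f^{\otimes k},\psi^{\otimes k}\rangle>\epsilon^{k}\|\psi^{\otimes k}\|_1
\qquad\text{and}\qquad
\orth\psi^{\otimes k}\geq kd.
\]

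The correlation condition is immediate from multiplicativity of tensor inner products and $L_1$ norms: $\langle f^{\otimes k},\psi^{\otimes k}\rangle=\langle f,\psi\rangle^{k}$ and $\|\psi^{\otimes k}\|_1=\|\psi\|_1^{k}$. Because $\langle f,\psi\rangle>\epsilon\|\psi\|_1\geq 0$, both sides of the inequality are nonnegative, so raising the strict inequality to the $k$-th power preserves it. The orthogonality bound is the technical heart of the argument. By linearity, I need only check that every monomial $x_1^{\alpha_1}\cdots x_k^{\alpha_k}$ of total degree $\sum_i|\alpha_i|<kd$ (with $\alpha_i$ a multi-index in the $i$-th variable block) is orthogonal to $\psi^{\otimes k}$. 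A pigeonhole step produces some block $i$ with $|\alpha_i|<d$, and then the factorization
\[
\langle x_1^{\alpha_1}\cdots x_k^{\alpha_k},\,\psi^{\otimes k}\rangle=\prod_{j=1}^{k}\langle x^{\alpha_j},\psi\rangle
\]
vanishes at its $i$-th factor since $\orth\psi\geq d$.

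For the second statement, observe that when $f$ takes values in $\zoo$ we have $\AND_k(y_1,\ldots,y_k)=\prod_{j}y_j$ on $\zoo^k$, so $\AND_k\circ f$ and $f^{\otimes k}$ are literally the same function (same domain, same values). The AND inequality therefore reduces instantly to the tensor-power inequality just proved. The only place I expect any friction is the orthogonality step for $\psi^{\otimes k}$, but the pigeonhole on multi-indices dispatches it cleanly; once that is in hand, the whole proposition is a compact packaging of the standard tensor-power trick on dual witnesses.
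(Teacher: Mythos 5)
Your proposal is correct and follows essentially the same route as the paper: tensor the dual witness $\psi$ supplied by Fact~\ref{fact:duality}, use multiplicativity of the correlation and the $\ell_1$ norm, and conclude via the dual characterization. The only difference is cosmetic: the paper invokes the identity $\orth\psi^{\otimes k}=k\orth\psi$ directly, while you prove the needed inequality $\orth\psi^{\otimes k}\geq k\orth\psi$ by the standard pigeonhole-and-factorization argument on monomials, and you make explicit the observation (implicit in the paper) that $\AND_k\circ f=f^{\otimes k}$ for $\zoo$-valued $f$.
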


\begin{proof}
We may assume that $\deg_\epsilon(f)\ne0$ since the
proposition is trivial otherwise. 
Let $\psi$ be an $\epsilon$-error dual polynomial for $f$,
as guaranteed by Fact \ref{fact:duality}:
\begin{align*}
 & \langle f,\psi\rangle>\epsilon\|\psi\|_{1},\\
 & \orth\psi=\deg_{\epsilon}(f).
\end{align*}
Then
\begin{align*}
\langle f^{\otimes k},\psi^{\otimes k}\rangle & =\langle f,\psi\rangle^{k}\\
 & >(\epsilon\|\psi\|_{1})^{k}\\
 & =\epsilon^{k}\|\psi^{\otimes k}\|_{1}.
\end{align*}
Applying Fact~\ref{fact:duality} once again,
\begin{align*}
\deg_{\epsilon^{k}}(f^{\otimes k}) & \geq\orth\psi^{\otimes k}\\
 & =k\orth\psi\\
 & =k\deg_{\epsilon}(f). &  & \qedhere
\end{align*}
\end{proof}

\noindent
The proof of Proposition~\ref{prop:tensor-deg-eps}
applies more generally to the conjunction of
$k$ distinct functions, but we will not need this
generalization.

\subsection{Warmup}
To illustrate our technique in the simplest possible
setting, we consider the well-studied $\AND_n$
function. Buhrman et al.~\cite{bCdWZ} proved that its 
$\epsilon$-error approximate degree is
$\Theta(\sqrt{n\log(1/\epsilon)}).$ We give a new and
simple proof of their lower bound.

\begin{thm}
For all $1/3^n\leq\epsilon\leq1/3,$
\begin{align}
\deg_{\epsilon}(\AND_{n}) & =\Omega\left(\sqrt{n\log\frac{1}{\epsilon}}\right). &  &\label{eq:and-deg-eps}
\end{align}
\end{thm}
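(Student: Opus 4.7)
The plan is to combine Proposition~\ref{prop:tensor-deg-eps} with the Nisan--Szegedy bound (Theorem~\ref{thm:and-bounded}) by using a self-composition $\AND_k \circ \AND_m$ that sits inside $\AND_n$ as a subfunction, with $k$ chosen so that the tensor-power error amplification $\epsilon \mapsto \epsilon^k$ lands exactly at the target error $\eps$. Specifically, I would set $k = \lfloor \log_3(1/\epsilon)\rfloor$ and $m = \lfloor n/k \rfloor$. The hypothesis $1/3^n \leq \epsilon \leq 1/3$ ensures $1 \leq k \leq n$, and hence $m \geq 1$.

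Next, observe that $\AND_k \circ \AND_m = \AND_{km}$, and this function is a subfunction of $\AND_n$ obtained by restricting the remaining $n-km$ coordinates to $1$. Restriction only decreases approximate degree, so
\begin{align*}
\deg_{\epsilon}(\AND_n) \;\geq\; \deg_{\epsilon}(\AND_{km}) \;=\; \deg_{\epsilon}(\AND_k \circ \AND_m).
\end{align*}
Since $\epsilon \leq (1/3)^k$ by choice of $k$, and $\deg_\delta$ is monotonically nonincreasing in $\delta$, we further have $\deg_\epsilon(\AND_k \circ \AND_m) \geq \deg_{(1/3)^k}(\AND_k \circ \AND_m)$.

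Now I apply Proposition~\ref{prop:tensor-deg-eps} with $f = \AND_m$ and error $1/3$, yielding
\begin{align*}
\deg_{(1/3)^k}(\AND_k \circ \AND_m) \;\geq\; k\cdot \deg_{1/3}(\AND_m) \;=\; \Omega\!\left(k\sqrt{m}\right),
\end{align*}
where the last step invokes Theorem~\ref{thm:and-bounded}. Substituting $m = \lfloor n/k \rfloor$ and $k = \lfloor \log_3(1/\epsilon)\rfloor$ gives $k\sqrt{m} = \Omega(\sqrt{nk}) = \Omega(\sqrt{n\log(1/\epsilon)})$, which is the desired bound.

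There is no real obstacle here beyond bookkeeping: the dual-polynomial engine is packaged in Proposition~\ref{prop:tensor-deg-eps}, the bounded-error base case is packaged in Theorem~\ref{thm:and-bounded}, and the only substantive idea is the choice $k = \Theta(\log(1/\epsilon))$, which equates the cost of amplifying the error ($\epsilon \to \epsilon^k$) with the gain in degree (a factor of $k$). The mildest care is needed at the boundary: when $\epsilon$ is close to $1/3^n$, the parameter $k$ can be as large as $n$ and $m$ as small as $1$, but the bound $\deg_{1/3}(\AND_m) = \Omega(\sqrt{m})$ remains meaningful throughout this range, and the conclusion degenerates correctly to the near-exact lower bound $\Omega(n)$.
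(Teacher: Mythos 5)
Your proposal is correct and follows essentially the same route as the paper: embed $\AND_k\circ\AND_{\lfloor n/k\rfloor}$ in $\AND_n$, apply Proposition~\ref{prop:tensor-deg-eps} with the Nisan--Szegedy bound, and take $k=\Theta(\log(1/\epsilon))$. The paper simply states the resulting bound $\deg_{3^{-k}}(\AND_n)=\Omega(\sqrt{nk})$ for every $k\le n$ and lets the choice of $k$ and the monotonicity of $\deg_\delta$ in $\delta$ remain implicit, whereas you spell these out; the content is identical.
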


\begin{proof}
For $k=1,2,\ldots,n,$  we have
\begin{align*}
\deg_{3^{-k}}(\AND_{n}) & \geq\deg_{3^{-k}}(\AND_{k}\circ\AND_{\lfloor n/k\rfloor})\\
 & \geq k\deg_{1/3}(\AND_{\lfloor n/k\rfloor})\\
 & =k\cdot\Omega\left(\sqrt{\frac{n}{k}}\right)\\
 & =\Omega(\sqrt{nk}),
\end{align*}
where the first, second, and third steps use 
the identity $\AND_{n_{1}n_{2}}=\AND_{n_{1}}\circ\AND_{n_{2}},$ Proposition~\ref{prop:tensor-deg-eps},
and Theorem~\ref{thm:and-bounded}, respectively.  This directly implies~(\ref{eq:and-deg-eps}).
\end{proof}

\subsection{Element distinctness}
Our next result is a tight lower bound on the
vanishing error approximate degree of element
distinctness, matching the 
upper bound from~\cite{sherstovalgopoly}. 

\begin{thm} \label{edthm}
For all $1/3^n\leq\epsilon\leq1/3,$
\begin{align}
\deg_{\epsilon}(\ED_{n}) & =\Omega\left(n^{2/3}\left(\log\frac{1}{\epsilon}\right)^{1/3}\right). \label{eq:ed-deg-eps}
\end{align}
\end{thm}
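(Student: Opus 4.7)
The plan is to mimic exactly the warmup argument for $\AND_n$, with two ingredients replaced: the self-composition identity $\AND_{n_1 n_2} = \AND_{n_1} \circ \AND_{n_2}$ gets swapped for a \emph{subfunction embedding} $\AND_k \circ \ED_{\lfloor n/k \rfloor} \preceq \ED_n$, and the Nisan--Szegedy bound $\deg_{1/3}(\AND_n) = \Omega(\sqrt{n})$ gets swapped for the Aaronson--Shi--Ambainis--Kutin bound of Theorem~\ref{thm:ed-bounded}, namely $\deg_{1/3}(\ED_m) = \Omega(m^{2/3})$.

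The only nontrivial step is the subfunction embedding, and it is essentially transparent from the ``function property'' viewpoint of $\ED_n$. Set $m = \lfloor n/k \rfloor$, partition the $n$ positions into $k$ blocks of size $m$ (discarding at most $k-1$ leftover positions by fixing them to distinct ``spare'' range values that do not appear anywhere else), and partition the $n$ range elements into $k$ disjoint blocks of size $m$ (plus the spare values). Restrict $\ED_n$ to inputs $x \in \Domain_{n,n}$ for which the $i$-th position block maps entirely into the $i$-th range block. Because the range blocks are pairwise disjoint, collisions can only occur within a single block, so on this restricted domain $\ED_n$ computes exactly the $\AND$ over $i = 1,\ldots,k$ of $\ED_m$ applied to the $i$-th block. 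Hence $\AND_k \circ \ED_m$ is a subfunction of $\ED_n$, and therefore $\deg_\epsilon(\ED_n) \geq \deg_\epsilon(\AND_k \circ \ED_m)$ for every $\epsilon \geq 0$.

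With the embedding in hand, the calculation is the same as for $\AND_n$. For each $k$ with $1 \leq k \leq n$, Proposition~\ref{prop:tensor-deg-eps} applied to $f = \ED_m$ at error $1/3$ yields
\[
\deg_{3^{-k}}(\ED_n) \;\geq\; \deg_{3^{-k}}(\AND_k \circ \ED_m) \;\geq\; k \cdot \deg_{1/3}(\ED_m) \;=\; \Omega\!\left(k \cdot (n/k)^{2/3}\right) \;=\; \Omega\!\left(k^{1/3} n^{2/3}\right).
\]
Given $\epsilon$ in the range $1/3^n \leq \epsilon \leq 1/3$, choose $k = \lceil \log_3(1/\epsilon) \rceil$, which is a positive integer at most $n$. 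Since $\deg_\epsilon(\ED_n)$ is monotone nonincreasing in $\epsilon$ and $3^{-k} \leq \epsilon$, the displayed bound gives $\deg_\epsilon(\ED_n) \geq \deg_{3^{-k}}(\ED_n) = \Omega(n^{2/3} \log^{1/3}(1/\epsilon))$, which is~(\ref{eq:ed-deg-eps}).

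I expect the main conceptual point (rather than a real obstacle) to be formalizing the subfunction embedding so that it lives in the correct domain $\Domain_{n,n}$ rather than $\Domain_{n,m}$; the disjoint-range partition trick above handles this cleanly. Everything else is a direct transcription of the $\AND_n$ warmup, and the same template will extend to $\kED_n$ (using its own bounded-error lower bound) once we observe that $\AND_k \circ \kED_m$ embeds into $\kED_n$ by the identical construction.
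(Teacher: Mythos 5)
Your proof is correct and follows essentially the same approach as the paper: the same block-diagonal subfunction embedding of $\AND_k \circ \ED_{\lfloor n/k\rfloor}$ into $\ED_n$ (your ``spare range values'' for the leftover $n - k\lfloor n/k\rfloor$ positions are exactly the paper's extra identity block on the diagonal), followed by the same chain of inequalities via Proposition~\ref{prop:tensor-deg-eps} and Theorem~\ref{thm:ed-bounded}. The only cosmetic difference is that you spell out the choice $k = \lceil \log_3(1/\epsilon)\rceil$ and the monotonicity step, which the paper leaves implicit.
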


\begin{proof}
For any $k=1,2,3,\ldots,n$, we claim that $\AND_{k}\circ\ED_{\lfloor n/k\rfloor}$
is a subproblem of $\ED_{n}$. To see why, recall that the input
to $\ED_{n}$ is an $n\times n$ Boolean matrix in which every row
$i$ contains exactly one $1$, corresponding to the value of the
$i$th element. Now, fix $k\in\{1,2,\ldots,n\}$ and consider the
restriction of $\ED_{n}$ to input matrices that are \emph{block-diagonal,}
with $k$ blocks of size $\lfloor n/k\rfloor$ each and an additional
block of $n-k\lfloor n/k\rfloor$ ones on the diagonal. Each of the
first $k$ blocks corresponds to an instance of $\ED_{\lfloor n/k\rfloor},$
and the overall problem amounts to computing the AND of these $k$
instances. Therefore, $\AND_{k}\circ\ED_{\lfloor n/k\rfloor}$
is a subproblem of $\ED_{n}$, and
\begin{equation}
\deg_{\epsilon}(\ED_{n})\geq\deg_{\epsilon}(\AND_{k}\circ\ED_{\lfloor n/k\rfloor})\label{eq:ed-tensor-subproblems}
\end{equation}
for all $\epsilon$ and all $k=1,2,3,\ldots,n.$

The rest of the proof is closely analogous to that for $\AND_{n}.$
For $k=1,2,\ldots,n,$ 
\begin{align*}
\deg_{3^{-k}}(\ED_{n}) & \geq\deg_{3^{-k}}(\AND_{k}\circ\ED_{\lfloor n/k\rfloor})\\
 & \geq k\deg_{1/3}(\ED_{\lfloor n/k\rfloor})\\
 & \geq k\cdot\Omega\left(\frac{n}{k}\right)^{2/3}\\
 & =\Omega(n^{2/3}k^{1/3})
\end{align*}
where the first three steps use (\ref{eq:ed-tensor-subproblems}),
Proposition~\ref{prop:tensor-deg-eps}, and
Theorem~\ref{thm:ed-bounded}, respectively.
This directly implies~(\ref{eq:ed-deg-eps}).
\end{proof}

The previous proof shows more
generally that $\AND_k\circ\ED^r_{\lfloor n/k\rfloor}$
is a subfunction of $\ED^r_n$ for any
$k=1,2,\ldots,n.$ As a result, our analysis 
of element distinctness proves the following statement.

\begin{thm} \label{ked}
Fix constants $r\geq2$ and $\ell \in [0, 1]$  such that 
\[\deg_{1/3}(\ED^r_{n}) = \Omega(n^{\ell}).\]
Then 
\begin{align*}
\deg_{\epsilon}(\ED^r_{n}) & =\Omega\left(n^{\ell}\left(\log\frac{1}{\epsilon}\right)^{1-\ell}\right), &  & \frac{1}{3^{n}}\leq\epsilon\leq\frac{1}{3}. 
\end{align*}
\end{thm}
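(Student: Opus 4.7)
The plan is to mimic exactly the proof of Theorem \ref{edthm}, substituting $\ED^r_n$ for $\ED_n$ throughout and using the hypothesized bounded-error lower bound $\deg_{1/3}(\ED^r_n) = \Omega(n^\ell)$ in place of Theorem \ref{thm:ed-bounded}. The starting point is the structural observation already noted in the paragraph preceding the statement: for every $k = 1, 2, \ldots, n$, the composition $\AND_k \circ \ED^r_{\lfloor n/k \rfloor}$ is a subfunction of $\ED^r_n$, obtained by restricting the input to block-diagonal matrices with $k$ blocks of size $\lfloor n/k \rfloor$. In particular, for every $\epsilon$,
\[
\deg_\epsilon(\ED^r_n) \geq \deg_\epsilon(\AND_k \circ \ED^r_{\lfloor n/k \rfloor}).
\]

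Next I would apply the amplification lemma, Proposition \ref{prop:tensor-deg-eps}, in the second form stated there, to the inner function $\ED^r_{\lfloor n/k \rfloor}$ with $\epsilon = 1/3$:
\[
\deg_{3^{-k}}(\AND_k \circ \ED^r_{\lfloor n/k \rfloor}) \geq k \deg_{1/3}(\ED^r_{\lfloor n/k \rfloor}).
\]
Invoking the hypothesis of the theorem, the right-hand side is $\Omega(k \cdot (n/k)^\ell) = \Omega(n^\ell k^{1-\ell})$. Chaining the two inequalities yields
\[
\deg_{3^{-k}}(\ED^r_n) = \Omega\bigl(n^\ell k^{1-\ell}\bigr)
\]
for every $k \in \{1, 2, \ldots, n\}$.

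Finally I would translate this parameterized bound into the desired form. Given any $\epsilon \in [1/3^n, 1/3]$, choose the integer $k$ with $k = \lfloor \log_3(1/\epsilon) \rfloor$, so that $1 \leq k \leq n$ and $3^{-k} \geq \epsilon$. Then by monotonicity of $\epsilon$-approximate degree in $\epsilon$,
\[
\deg_\epsilon(\ED^r_n) \geq \deg_{3^{-k}}(\ED^r_n) = \Omega\bigl(n^\ell k^{1-\ell}\bigr) = \Omega\!\left(n^\ell \left(\log\frac{1}{\epsilon}\right)^{1-\ell}\right),
\]
where in the last step we used $1-\ell \geq 0$. This establishes the claimed bound.

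Because each step is a direct transcription of the argument for $\ED_n$ with the quantitative exponent $2/3$ replaced by the hypothesized exponent $\ell$, I do not anticipate any genuine obstacle; the only point that deserves a sentence of care is verifying that the exponent of $\log(1/\epsilon)$ comes out to $1-\ell$ (not $1/3$ as in the $\ED_n$ case), which falls out of the calculation $k \cdot (n/k)^\ell = n^\ell k^{1-\ell}$ and the choice $k = \Theta(\log(1/\epsilon))$.
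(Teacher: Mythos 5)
Your proof is correct and follows exactly the route the paper takes: the paper explicitly remarks that the block-diagonal embedding gives $\AND_k\circ\ED^r_{\lfloor n/k\rfloor}$ as a subfunction of $\ED^r_n$ and that Theorem~\ref{ked} therefore follows from the same argument as Theorem~\ref{edthm}, with the exponent $2/3$ replaced by $\ell$. Your final step (choosing $k=\lfloor\log_3(1/\epsilon)\rfloor$ and noting $1-\ell\geq 0$) spells out what the paper leaves implicit, but there is no substantive difference.
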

\noindent
Combining Theorem~\ref{ked} with the known lower bound 
\[\deg_{1/3}\left(\ED^r_{n}\right) = \tilde\Omega\left(n^{\frac34-\frac1{2r}}\right)\]
due to~\cite{bkt},
we conclude that 
\[\deg_{\epsilon}\left(\ED^r_{n}\right) =
\tilde\Omega\left(n^{\frac34-\frac1{2r}} \left(\log\frac1\epsilon\right)^{\frac14 + \frac1{2r}}\right)
\]
for $1/3^{n}\leq\epsilon\leq 1/3.$
Moreover, Theorem \ref{ked} will, in a black-box
manner, translate any future improvement in the
bounded-error lower bound for $\ED^r_{n}$
into an improved vanishing-error lower bound.

\subsection{Surjectivity}
An instance $x$ of the surjectivity problem
$\SURJ_{n,r}$ can be embedded
inside a larger instance of surjectivity in many ways,
e.g., by duplicating a row of $x$ or
by forming a block-diagonal matrix with blocks $x$
and $1.$ These two transformations yield
\begin{align}
&\deg_\epsilon(\SURJ_{n,r})\leq \deg_\epsilon(\SURJ_{n+1,r}),
\label{eq:surj-extend-domain}\\
&\deg_\epsilon(\SURJ_{n,r})\leq \deg_\epsilon(\SURJ_{n+1,r+1}),
\label{eq:surj-extend-domain-and-range}
\end{align}
respectively.
We will now prove an essentially tight lower bound on the
vanishing-error approximate degree of surjectivity,
matching the upper bound from~\cite{sherstovalgopoly}
up to a logarithmic factor.

\begin{thm} \label{thm:surj}
Let $0<c<1$ be an arbitrary constant. Then 
\begin{align*}
\deg_{\epsilon}(\SURJ_{n,\lfloor cn\rfloor })=
\tilde\Omega\left(n^{3/4}\left(\log\frac{1}{\epsilon}\right)^{1/4}\right),&&
\frac{1}{3^{n}}\leq\epsilon\leq\frac{1}{3}. 
\end{align*}
\end{thm}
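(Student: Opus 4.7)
The plan is to mirror the strategy used for $\AND_n$ and $\ED_n$: show that $\AND_k \circ \SURJ_{m,r'}$ is a subfunction of $\SURJ_{n,\lfloor cn\rfloor}$ for suitable $m = \Theta(n/k)$ and $r' = \Theta(m)$, then combine with Proposition~\ref{prop:tensor-deg-eps} and Theorem~\ref{thm:surj-bounded}.

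For the subproblem claim, given $k$ independent instances of $\SURJ_{m,r'}$, I arrange them along the diagonal of a $(km)\times(kr')$ Boolean matrix. The resulting matrix lies in $\Domain_{km,kr'}$ and is surjective iff each of the $k$ blocks is, so $\AND_k \circ \SURJ_{m,r'}$ is realized as an instance of $\SURJ_{km,kr'}$. I then use the two extension operations~(\ref{eq:surj-extend-domain}) and~(\ref{eq:surj-extend-domain-and-range}) to lift this to $\SURJ_{n,\lfloor cn\rfloor}$; composing them embeds $\SURJ_{n_1,r_1}$ into $\SURJ_{n_2,r_2}$ whenever $n_2\geq n_1$, $r_2\geq r_1$, and $n_2-n_1 \geq r_2-r_1$. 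Taking $m = \lfloor n/(ak)\rfloor$ and $r' = \lfloor cm\rfloor$ for a sufficiently large constant $a$ (depending on $c$) provides enough row slack to absorb the column deficit, valid for all $k$ up to a constant fraction of $n$.

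With the embedding in place, Proposition~\ref{prop:tensor-deg-eps} and Theorem~\ref{thm:surj-bounded} give
\begin{align*}
\deg_{3^{-k}}(\SURJ_{n,\lfloor cn\rfloor})
&\geq \deg_{3^{-k}}(\AND_k \circ \SURJ_{m,\lfloor cm\rfloor})\\
&\geq k\deg_{1/3}(\SURJ_{m,\lfloor cm\rfloor}) = \tilde\Omega(k\cdot m^{3/4}) = \tilde\Omega(n^{3/4} k^{1/4}).
\end{align*}
Setting $k = \lfloor \log_3(1/\epsilon)\rfloor$ gives the claimed bound for all $\epsilon$ in the regime where the embedding is valid. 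For the remaining range of very small $\epsilon$ (where $k$ exceeds the constant-fraction threshold), the target quantity $n^{3/4} k^{1/4}$ never exceeds $n$, and the bound follows from monotonicity of $\deg_\epsilon$ in $\epsilon$ combined with the saturated lower bound $\tilde\Omega(n)$ obtained at the threshold value of $k$.

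The main obstacle will be the embedding step. Unlike $\AND$ or $\ED$, surjectivity is sensitive to the range parameter, and the allowed extension operations never add more columns than rows. Thus the block-diagonal instance must leave the row surplus $n-km$ at least as large as the column deficit $\lfloor cn\rfloor - kr'$, which is what forces the inner problem to be a constant factor smaller than $n/k$—a loss harmlessly absorbed by $\tilde\Omega$.
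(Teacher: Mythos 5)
Your proposal is correct and mirrors the paper's proof: the same block-diagonal embedding of $\AND_k \circ \SURJ$ into $\SURJ$, lifted to $\SURJ_{n,\lfloor cn\rfloor}$ via the two extension operations~(\ref{eq:surj-extend-domain}) and~(\ref{eq:surj-extend-domain-and-range}), followed by Proposition~\ref{prop:tensor-deg-eps} and Theorem~\ref{thm:surj-bounded}. The only cosmetic difference is bookkeeping: the paper uses the exact inner size $\lfloor n/k\rfloor-1$ with a careful floor-arithmetic chain, while you insert a constant slack factor $a$; both are absorbed by the $\tilde\Omega$.
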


\begin{proof}
The proof is a cosmetic adaptation of the analysis of element distinctness.
To start with, we claim that for any positive integers $n,r,k$ such that $k\mid n$ and
$k\mid r,$
the composition $\AND_{k}\circ\SURJ_{n/k,r/k}$
is a subproblem of $\SURJ_{n,r}$. Indeed, the input
to $\SURJ_{n,r}$ is an $n\times r$ Boolean matrix in which every row
$i$ contains exactly one $1$.
Consider the
restriction of $\SURJ_{n,r}$ to input matrices that are block-diagonal,
with $k$ blocks of size $n/k\,\times\, r/k$ each.
Each of these
blocks corresponds to an instance of $\SURJ_{n/k,r/k},$
and the overall problem amounts to computing the AND of these $k$
instances. This settles the claim.

Now let $n$ be arbitrary. Then for all positive
integers $k\leq\min\{cn,(1-c)n\},$ 
\begin{align*}
\deg_{3^{-k}}(\SURJ_{n,\lfloor cn\rfloor})
 & \geq\deg_{3^{-k}}(\SURJ_{n-(\lfloor cn\rfloor-k\lfloor cn/k\rfloor),k\lfloor cn/k\rfloor})\\
 & \geq\deg_{3^{-k}}(\SURJ_{n-k,k\lfloor cn/k\rfloor})\\
 & \geq\deg_{3^{-k}}(\SURJ_{k(\lfloor n/k\rfloor -1),k\lfloor cn/k\rfloor})\\
 & \geq\deg_{3^{-k}}(\AND_k\circ \SURJ_{\lfloor n/k\rfloor -1,\lfloor cn/k\rfloor})\\
 & \geq k\deg_{1/3}(\SURJ_{\lfloor n/k\rfloor-1,\lfloor cn/k\rfloor})\\
 & \geq k\cdot\tilde\Omega\left(\frac{n}{k}\right)^{3/4}\\
 & =\tilde\Omega(n^{3/4}k^{1/4}),
\end{align*}
where the first step
uses~(\ref{eq:surj-extend-domain-and-range});
the second and third steps
use~(\ref{eq:surj-extend-domain});
the fourth step applies the claim from the opening
paragraph of the proof; the fifth step
is valid by Proposition~\ref{prop:tensor-deg-eps};
and the sixth step invokes Theorem~\ref{thm:surj-bounded}.
This settles the theorem.
\end{proof}

\subsection{Permutation testing}

We now turn to the permutation testing problem, which
requires a more subtle analysis than the functions that we
have examined so far. The difficulty is that
permutation testing does not admit a self-reduction with
$\AND$ as an outer function. To address this, we will need to
generalize Proposition~\ref{prop:tensor-deg-eps}
appropriately.
For a real $0\leq\alpha<1$ and an integer $k\geq1,$ we define $\AND_{k,\alpha}$
to be the restriction of $\AND_{k}$ to inputs whose Hamming weight
is either $k$ or at most $\alpha k.$ The following
result subsumes Proposition~\ref{prop:tensor-deg-eps} as the special case
$\alpha=(k-1)/k.$

\begin{prop}
\label{prop:tensor-deg-eps-restriction}Fix a real number $0\leq\alpha<1$
and an integer $k\geq1.$ Then for any $\epsilon\geq0$ and any function
$f\colon X\to\zoo$ on a finite subset $X$ of Euclidean space,
\begin{align*}
\deg_{\epsilon^{k}/\binom{k-1}{\lfloor\alpha k\rfloor}}(\AND_{k,\alpha}\circ f) & \geq(\lfloor\alpha k\rfloor+1)\deg_{\epsilon}(f).
\end{align*}
In particular,
\[
\deg_{(\epsilon/2)^{k}}(\AND_{k,\alpha}\circ f)\geq\alpha k\deg_{\epsilon}(f).
\]
\end{prop}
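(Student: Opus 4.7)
The plan is to extend the tensor-power dual construction of Proposition~\ref{prop:tensor-deg-eps} to the restricted-domain setting of $\AND_{k,\alpha}\circ f$. Write $\ell=\lfloor \alpha k\rfloor$, so the target orthogonality is $(\ell+1)d$ where $d=\deg_\epsilon(f)$. First I take a dual witness $\psi$ for $f$ with $\orth \psi = d$ and $\langle f,\psi\rangle>\epsilon\|\psi\|_1$, normalized so $\|\psi\|_1=1$. The key observation is that $(\ell+1)d$ is strictly below the $kd$ orthogonality achieved by $\psi^{\otimes k}$, leaving room to devote only $\ell+1$ of the $k$ coordinates to ``$\psi$-slots'' and to shape the remaining $k-\ell-1$ coordinates so that the support of the dual lies inside $D=\dom(\AND_{k,\alpha}\circ f)$.

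Concretely, I would try a dual of the form
\begin{equation*}
\Psi(x_1,\ldots,x_k)=\sum_{S\subseteq [k]:\,|S|=\ell+1}\prod_{i\in S}\psi(x_i)\cdot\prod_{i\notin S}\tau(x_i),
\end{equation*}
where $\tau$ is an auxiliary signed measure on $X$, possibly accompanied by signed coefficients $c_S$ in front of the summands. Orthogonality is then essentially automatic: any monomial $\prod_i m_i$ of total degree strictly less than $(\ell+1)d$ has $\deg m_i\geq d$ for at most $\ell$ indices, so every $(\ell+1)$-subset $S$ contains some $i\in S$ with $\deg m_i<d$, forcing $\langle\psi,m_i\rangle=0$ and killing the $S$-summand. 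Hence $\orth\Psi\geq(\ell+1)d$ irrespective of $\tau$.

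The real work is to pick $\tau$ so that $\Psi$ is supported on $D$---i.e.\ the Hamming weight of $(f(x_1),\ldots,f(x_k))$ lies in $\{0,\ldots,\ell\}\cup\{k\}$---and so that $\langle F,\Psi\rangle/\|\Psi\|_1$ exceeds $\epsilon^k/\binom{k-1}{\ell}$, where $F=\AND_{k,\alpha}\circ f$. The natural one-sided choices $\tau=\mathbf{1}[f=1]$ and $\tau=\mathbf{1}[f=0]$ place the support on Hamming weights $\geq k-\ell-1$ and $\leq\ell+1$, respectively; restricted to $D$, each covers only one side of $F$. My plan for two-sided support is to combine the two choices into $\Psi=\Psi'-\Psi''$ and then add an auxiliary correction supported on $F^{-1}(0)$ that matches the low-degree moments of any residual mass at the ``forbidden'' intermediate weights $\ell+1,\ldots,k-1$; the orthogonality argument above is insensitive to adding anything supported on $F^{-1}(0)$ whose low-degree moments match.

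The main obstacle I anticipate is engineering the cancellation of $\Psi$ at the intermediate weights $\ell+1,\ldots,k-1$ while keeping $\|\Psi\|_1$ small enough to achieve the factor $\binom{k-1}{\ell}$ in the denominator. Since the sum involves $\binom{k}{\ell+1}=\frac{k}{\ell+1}\binom{k-1}{\ell}$ summands, this combinatorial factor should enter naturally when bounding $\|\Psi\|_1$ by a sum of summand norms. Once these verifications are complete, the claim follows from Fact~\ref{fact:duality}, and the ``in particular'' consequence $\deg_{(\epsilon/2)^k}(\AND_{k,\alpha}\circ f)\geq\alpha k\deg_\epsilon(f)$ follows from $\binom{k-1}{\lfloor\alpha k\rfloor}\leq 2^{k-1}$, which makes $(\epsilon/2)^k\leq\epsilon^k/\binom{k-1}{\lfloor\alpha k\rfloor}$.
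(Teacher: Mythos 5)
There is a genuine gap in the support argument, and it is exactly the place where the real work of the proof lives. Your orthogonality computation for the ansatz $\Psi=\sum_{|S|=\ell+1}\prod_{i\in S}\psi(x_i)\prod_{i\notin S}\tau(x_i)$ is fine, but neither of the ``natural one-sided choices'' for $\tau$ places the support of $\Psi$ inside $\dom(\AND_{k,\alpha}\circ f)$: with $\tau=\mathbf{1}[f=1]$ the support spills onto Hamming weights $k-\ell-1,\ldots,k-1$, which are forbidden once $\alpha<1/2$, and with $\tau=\mathbf{1}[f=0]$ it spills onto weight $\ell+1$. Your proposed fix---subtract the mass at forbidden weights and add a ``correction'' on $F^{-1}(0)$ matching its low-degree moments---is not established: there is no argument that such a correction exists (it requires that no degree-$<(\ell+1)d$ polynomial vanish on $F^{-1}(0)$ yet correlate with the residual mass), no control on $\|\Psi_{\mathrm{corr}}\|_1$, and no resulting correlation bound. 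You also leave the required inequality $\langle F,\Psi\rangle>\eps^k\binom{k-1}{\ell}^{-1}\|\Psi\|_1$ as a hope about combinatorial factors rather than a computation, and your count $\binom{k}{\ell+1}$ is not the $\binom{k-1}{\ell}$ that actually appears.

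The paper avoids all of this with a different and more economical construction: it takes
\[
\Psi(x_1,\ldots,x_k)=\psi^{\otimes k}(x_1,\ldots,x_k)\cdot\prod_{i=\ell+1}^{k-1}\bigl(f(x_1)+\cdots+f(x_k)-i\bigr).
\]
The extra factor is a univariate polynomial of degree $k-\ell-1$ in the integer $\sum_j f(x_j)$ that vanishes \emph{exactly} at the forbidden weights $\ell+1,\ldots,k-1$, so the support condition is automatic---no case analysis, no correction term. Because $f$ is $\{0,1\}$-valued, expanding this factor in the $f(x_j)$'s produces a sum over subsets $T$ with $|T|\leq k-\ell-1$ of terms $\prod_{j\in T}(\psi f)(x_j)\prod_{j\notin T}\psi(x_j)$; each term has $\geq\ell+1$ coordinates carrying $\psi$ alone, giving $\orth\Psi\geq(\ell+1)\orth\psi$ by the same reasoning you used. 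The pointwise bound $|\Psi|\leq|\psi^{\otimes k}|\cdot(k-1)!/\ell!$ (the extreme of the product occurring at weight $0$), together with the value $(k-\ell-1)!$ of the product at weight $k$, then yields $\langle\AND_{k,\alpha}\circ f,\Psi\rangle>\eps^k\binom{k-1}{\ell}^{-1}\|\Psi\|_1$, which is the binomial you need. If you want to salvage your subset-sum picture, the right takeaway is that the ``$\tau$'' should be $\psi\cdot f$ rather than an indicator, the sum ranges over subsets of \emph{all} sizes up to $k-\ell-1$, and the coefficients must be the expansion coefficients of the vanishing polynomial; picking them freely does not give you the support property.
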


\begin{proof}
We may assume that $\deg_\epsilon(f)\ne0$ since the
proposition is trivial otherwise. 
Let $\psi$ be an $\epsilon$-error dual polynomial for $f$,
as guaranteed by Fact \ref{fact:duality}:
\begin{align*}
 & \langle f,\psi\rangle>\epsilon\|\psi\|_{1},\\
 & \orth\psi=\deg_{\epsilon}(f).
\end{align*}
Abbreviate $\ell=\lfloor\alpha k\rfloor$ and define $\Psi\colon X^{k}\to\Re$
by
\[
\Psi(x_{1},x_{2},\ldots,x_{k})=\prod_{i=1}^{k}\psi(x_{i})\cdot\prod_{i=\ell+1}^{k-1}(f(x_{1})+f(x_{2})+\cdots+f(x_{k})-i).
\]
Observe that $\Psi$ is supported on the domain of $\AND_{k,\alpha}\circ f.$
Moreover, we have the pointwise inequality
\begin{align}
|\Psi| & \leq|\psi^{\otimes k}|\prod_{i=\ell+1}^{k-1}i\nonumber \\
 & =|\psi^{\otimes k}|\cdot\frac{(k-1)!}{\ell!}.\label{eq:Psi-psi-pointwise}
\end{align}
Now
\begin{align*}
\langle\Psi,\AND_{k,\alpha}\circ f\rangle & =\langle\Psi,f^{\otimes k}\rangle\\
 & =(k-\ell-1)!\,\langle\psi^{\otimes k},f^{\otimes k}\rangle\\
 & >(k-\ell-1)!\,\epsilon^{k}\|\psi\|_{1}{}^{k}\\
 & =(k-\ell-1)!\,\epsilon^{k}\|\psi^{\otimes k}\|_{1}\\
 & \geq\epsilon^{k}\cdot\frac{(k-\ell-1)!\,\ell!}{(k-1)!}\,\|\Psi\|_{1}\\
 & =\epsilon^{k}\binom{k-1}{\ell}^{-1}\|\Psi\|_{1},
\end{align*}
where the next-to-last step uses~(\ref{eq:Psi-psi-pointwise}). Applying Fact
\ref{fact:duality} once again,
\begin{align*}
\deg_{\epsilon^{k}/\binom{k-1}{\ell}}(\AND_{k,\alpha}\circ f) & \geq\orth\Psi\\
 & \geq(\ell+1)\orth\psi\\
 & =(\ell+1)\deg_{\epsilon}(f). &  & \qedhere
\end{align*}
\end{proof}

For $m\leq n,$
a permutation testing instance $\phi\colon\{1,2,\ldots,m\}\to\{1,2,\ldots,m\}$
can be extended in a natural way to a larger instance $\Phi\colon\{1,2,\ldots,n\}\to\{1,2,\ldots,n\}$
by letting $\Phi(i)=i$ for $i=m+1,m+2,\ldots,n.$ This gives
\begin{align}
\deg_{\epsilon}(\PTP_{m,\alpha}) & \leq\deg_{\epsilon}\left(\PTP_{n,\frac{m}{n}\cdot\alpha+\frac{n-m}{n}}\right), &  & m\leq n.\label{eq:ptp-subproblem}
\end{align}
We are now in a position to prove our lower
bound on the $\epsilon$-approximate degree of
permutation testing. 

\begin{thm} \label{ptpthm}
Let $0<\alpha<1$ be a given constant. Then 
\begin{align}
\deg_{\epsilon}(\PTP_{n,\alpha}) & =\Omega\left(n^{1/3}\left(\log\frac{1}{\epsilon}\right)^{2/3}\right), &  & \frac{1}{3^{n}}\leq\epsilon\leq\frac{1}{3}.\label{eq:ptp-eps-deg}
\end{align}
\end{thm}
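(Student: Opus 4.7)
The strategy parallels the proofs of Theorems~\ref{edthm} and \ref{thm:surj}, using the partial function $\AND_{k,\alpha'}$ in place of $\AND_{k}$ and Proposition~\ref{prop:tensor-deg-eps-restriction} in place of Proposition~\ref{prop:tensor-deg-eps}. First I would establish a self-reduction: for suitable constants $\alpha',\beta,c_0\in(0,1)$ depending only on $\alpha,$ and for every integer $k$ with $1\le k\le c_0 n,$ the function $\AND_{k,\alpha'}\circ\PTP_{\lfloor n/k\rfloor,\beta}$ is a subproblem of $\PTP_{n,\alpha}.$ The embedding places $k$ independent instances of the inner $\PTP$ block-diagonally and fills the remaining $n-k\lfloor n/k\rfloor$ rows and columns with identity fixed points, invoking~(\ref{eq:ptp-subproblem}) to lift the resulting instance up to size $n.$

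To verify the parameters, observe that if every inner block is a permutation then the padded mapping is a permutation of $\{1,\ldots,n\},$ while in the complementary case of $\AND_{k,\alpha'}$ at most $\alpha'k$ blocks are permutations and the remaining blocks each have image size at most $\beta\lfloor n/k\rfloor,$ so the total image size is bounded by
$$n-k\lfloor n/k\rfloor(1-\alpha')(1-\beta).$$
This is at most $\alpha n$ provided $k/n\le 1-(1-\alpha)/((1-\alpha')(1-\beta)),$ and choosing $\alpha'=\beta=\alpha/3$ makes the right-hand side a positive constant $c_0=c_0(\alpha),$ so the subproblem relation holds for all $k\le c_0 n.$

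With the embedding in hand, the endgame is a short chain: for every $1\le k\le c_0 n,$ combining the self-reduction, Proposition~\ref{prop:tensor-deg-eps-restriction} applied with $\epsilon=1/3,$ and Corollary~\ref{cor:ptp-bounded} yields
\begin{align*}
\deg_{(1/6)^{k}}(\PTP_{n,\alpha})
&\ge \deg_{(1/6)^{k}}\bigl(\AND_{k,\alpha'}\circ\PTP_{\lfloor n/k\rfloor,\beta}\bigr)\\
&\ge \alpha'k\cdot\deg_{1/3}\bigl(\PTP_{\lfloor n/k\rfloor,\beta}\bigr)\\
&=\Omega\bigl(n^{1/3}k^{2/3}\bigr).
\end{align*}
For $\epsilon\in[1/3^n,1/3],$ setting $k=\min\{\lceil\log_6(1/\epsilon)\rceil,\lfloor c_0n\rfloor\}$ produces~(\ref{eq:ptp-eps-deg}): when the first argument of the minimum is active the chain gives the desired bound directly, and when the second is active we obtain $\Omega(n),$ which already dominates the target $\Omega(n^{1/3}(\log(1/\epsilon))^{2/3})=O(n)$ in that $\epsilon$-regime by monotonicity of $\deg_\epsilon$ in $\epsilon.$

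The principal obstacle is the embedding step: the outer slack (tolerating up to $\alpha'k$ bad blocks) and the inner slack (each bad block contributing up to $\beta\lfloor n/k\rfloor$ image elements), combined with the identity padding from~(\ref{eq:ptp-subproblem}), must still produce a valid $\PTP_{n,\alpha}$ subproblem for the prescribed target $\alpha,$ and the allowable range $k\le c_0 n$ must reach all the way up to $k=\Theta(n)$ in order to cover $\epsilon$ as small as $1/3^n.$ Once $\alpha',\beta,c_0$ are calibrated, everything else is mechanical and mirrors the element-distinctness and surjectivity proofs above.
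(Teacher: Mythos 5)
Your proposal is correct and follows essentially the same route as the paper's proof: a block-diagonal self-reduction exhibiting $\AND_{k,\alpha'}\circ\PTP_{\lfloor n/k\rfloor,\beta}$ as a subfunction of $\PTP_{n,\alpha}$, an application of Proposition~\ref{prop:tensor-deg-eps-restriction} with $\epsilon=1/3$, and Corollary~\ref{cor:ptp-bounded} for the bounded-error base case, with $k$ ranging up to $\Theta_\alpha(n)$. The only differences are cosmetic — you fold the identity padding directly into the embedding rather than invoking~(\ref{eq:ptp-subproblem}) separately, and you use $\alpha'=\beta=\alpha/3$ where the paper uses $\alpha/4$ — and the one small slip (taking $k=\lceil\log_6(1/\epsilon)\rceil$ rather than $\lfloor\log_6(1/\epsilon)\rfloor$, which points the monotonicity of $\deg_\epsilon$ the wrong way) is trivially repaired.
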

\begin{proof}
Let $0<\beta<1$ be arbitrary. We claim that for any positive integers
$n$ and $k$ with $k\mid n,$ the permutation testing function $\PTP_{n,\beta}$
contains 
\begin{equation}
\AND_{k,\beta/2}\circ\PTP_{n/k,\beta/2}\label{eq:and-ptp}
\end{equation}
as a subfunction. The proof is similar to that for element distinctness.
Specifically, view instances of~(\ref{eq:and-ptp}) as block-diagonal
matrices with $k$ blocks of size $n/k$ each. Then a positive instance
of~(\ref{eq:and-ptp}) is a permutation matrix and therefore a positive
instance of $\PTP_{n,\beta}$. A negative instance of~(\ref{eq:and-ptp}),
on the other hand, features at least $k-\frac{\beta}{2}k$ blocks
from $(\PTP_{n/k,\beta/2})^{-1}(0)$ and therefore corresponds to
a mapping $\{1,2,\ldots,n\}\to\{1,2,\ldots,n\}$ with a range of size
at most
\[
n-\left(k-\frac{\beta k}{2}\right)\cdot\left(\frac{n}{k}-\frac{\beta n}{2k}\right)\leq\beta n.
\]
In particular, any negative instance of~(\ref{eq:and-ptp}) is also
a negative instance of~$\PTP_{n,\beta}$. This completes the proof
of the claim. 

Now for any $\epsilon\geq0$ and any $k\in\{1,2,\ldots,\lceil\alpha n/2\rceil\},$
we have
\begin{align}
\deg_{\epsilon}(\PTP_{n,\alpha}) & \geq\deg_{\epsilon}(\PTP_{k\lfloor n/k\rfloor,\alpha/2})\nonumber \\
 & \geq\deg_{\epsilon}(\AND_{k,\alpha/4}\circ\PTP_{\lfloor n/k\rfloor,\alpha/4}),\label{eq:ptp-tensor-subproblems}
\end{align}
where the first inequality uses~(\ref{eq:ptp-subproblem}), and the
second inequality follows from the claim established in the previous
paragraph. The rest of the proof is analogous to those for $\AND_{n}$
and $\ED_{n}$. 
For $k=1,2,\ldots,\lceil\alpha n/2\rceil,$ 
\begin{align*}
\deg_{6^{-k}}(\PTP_{n,\alpha}) & \geq\deg_{6^{-k}}(\AND_{k,\alpha/4}\circ\PTP_{\lfloor n/k\rfloor,\alpha/4})\\
 & \geq\frac{\alpha k}{4}\deg_{1/3}(\PTP_{\lfloor n/k\rfloor,\alpha/4})\\
 & =\frac{\alpha k}{4}\cdot\Omega\left(\frac{n}{k}\right)^{1/3}\\
 & =\Omega(n^{1/3}k^{2/3}),
\end{align*}
where the first three steps are valid by (\ref{eq:ptp-tensor-subproblems}), Proposition~\ref{prop:tensor-deg-eps-restriction},
and Corollary~\ref{cor:ptp-bounded}, respectively.  This directly
implies~(\ref{eq:ptp-eps-deg}). 
\end{proof}

We will now show that Theorem~\ref{ptpthm} is optimal
with respect to all parameters.  In fact, we will prove
the stronger result that permutation testing has an $\epsilon$-error
quantum query algorithm with cost
$O(n^{1/3}\log^{2/3}(1/\epsilon))$.  Our quantum
algorithm is inspired by the well-known algorithm for the
collision problem due to Brassard et al.~\cite{bht}.

\begin{thm}
\label{thm:ptp-qquery-upper}
Let $0<\alpha<1$ be a given constant. Then for all
$n\geq1$ and $1/3^n\leq\epsilon\leq1/3,$
the permutation testing problem $\PTP_{n,\alpha}$ has
an $\epsilon$-error quantum query algorithm with cost
$O(n^{1/3}\log^{2/3}(1/\epsilon)).$
In particular,
\begin{align}
\deg_\epsilon(\PTP_{n,\alpha})=O\left(n^{1/3}
\left(\log\frac1\epsilon\right)^{2/3}\right).
\label{eq:ptp-eps-upper}
\end{align}
\end{thm}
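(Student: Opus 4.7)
The plan is to adapt the Brassard--H{\o}yer--Tapp collision-finding algorithm~\cite{bht} with two modifications: enlarge the classical sample so that it can be amortized across multiple Grover searches, and repeat the Grover search to amplify success. Fix $m = \Theta(n^{1/3}\log^{2/3}(1/\epsilon))$ and $r = \Theta(\log(1/\epsilon))$. The algorithm first draws a uniformly random $T\subseteq\{1,\ldots,n\}$ of size $m$ and classically queries $f(i)$ for every $i\in T$; if two sampled inputs collide, output~$0$. Otherwise, perform $r$ independent runs of Grover's algorithm on $\{1,\ldots,n\}\setminus T$, each searching for an $i$ with $f(i)\in f(T)$ using $t = \Theta(\sqrt{(n-m)/m})$ iterations; output $0$ if any run finds such an $i$, and $1$ otherwise. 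The query cost is $m + rt = \Theta(n^{1/3}\log^{2/3}(1/\epsilon))$, obtained by balancing $m \asymp r\sqrt{n/m}$.

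Correctness on $\PTP_{n,\alpha}^{-1}(1)$ is immediate, since a permutation has no collisions and neither phase can ever find one. For $x\in \PTP_{n,\alpha}^{-1}(0)$ with image size $R\leq \alpha n$, the multiplicities $c_k$ satisfy $\sum_k c_k(c_k-1)\geq (1-\alpha)n/\alpha$ by convexity. A short linearity-of-expectation computation then gives
\[
\Exp_T\bigl[\,|M(T)|\,\bigr] \;=\; \Omega_\alpha(m), \qquad M(T)=\{\,i\notin T : f(i)\in f(T)\,\}.
\]
It then remains to convert this into a high-probability statement: conditional on $|M(T)|=\Omega(m)$, each Grover round succeeds with constant probability, so $r=\Theta(\log(1/\epsilon))$ rounds all fail with probability at most $\epsilon/2$; separately, a concentration argument over the sample $T$ shows that $|M(T)|=\Omega(m)$ holds with probability at least $1-\epsilon/2$. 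A union bound delivers overall failure at most $\epsilon$.

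The step I expect to be the main obstacle is this tail bound on $|M(T)|$, since the sample is drawn without replacement and $|M(T)| = \sum_k c_k\,\mathbf{1}[a_k\geq 1] - m$ is a nonlinear functional of the hypergeometric counts $a_k = |f^{-1}(k)\cap T|$. The cleanest route I see is to invoke a Chernoff--Hoeffding bound for negatively associated random variables: the coordinates of a hypergeometric vector are negatively associated, and negative association is preserved under coordinate-wise monotone transformations, so the weighted sum $\sum_k c_k\mathbf{1}[a_k\geq 1]$ concentrates exponentially around its mean. A short case split handles inputs with a single very heavy image separately, where $|M(T)|$ is already large deterministically whenever that image is hit by $T$ (which happens with overwhelming probability when $c_k m/n$ is large).

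The approximate degree bound~(\ref{eq:ptp-eps-upper}) then follows from the Beals--Buhrman--Cleve--Mosca--de~Wolf correspondence: a $T$-query quantum algorithm computing $f$ with error at most $\epsilon$ on every input in $\dom f$ yields a real polynomial in the input bits of degree at most $2T$ that $\epsilon$-approximates $f$ pointwise on $\dom f$. Substituting $T = O(n^{1/3}\log^{2/3}(1/\epsilon))$ gives the claimed bound on $\deg_\epsilon(\PTP_{n,\alpha})$.
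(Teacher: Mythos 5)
Your algorithm (classical sample plus $\Theta(\log(1/\epsilon))$ independent Grover runs), your parameter choices, and the closing step via the Beals et al.\ polynomial method all coincide with the paper's. The two arguments diverge precisely at the step you flag as the main obstacle: obtaining a high-probability lower bound on the number of marked items $|M(T)|$. As you observe, $|M(T)|=\sum_k c_k\mathbf{1}[a_k\geq 1]-m$ is a \emph{weighted} sum of negatively associated indicators, and the Chernoff--Hoeffding bound obtained from negative association degrades as $\max_k c_k$ grows, so a case split on heavy image values is needed. Carrying that split out is delicate: the threshold on $c_k$ below which the concentration bound is useful does not obviously match the threshold above which the union-bound argument for heavy columns is useful, leaving an intermediate regime to account for. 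The paper avoids all of this with a simpler surrogate quantity. Call a point $i$ \emph{special} if $|\phi^{-1}(\phi(i))|\geq 2$; when $|\im\phi|\leq\alpha n$, at least $(1-\alpha)n$ points are special, because the non-special points map injectively into $\im\phi$. If $\phi$ is one-to-one on $T$ --- the only case in which the Grover phase runs --- then every special point $i\in T$ has a collision partner $j\notin T$, and distinct special points of $T$ yield distinct partners since their $\phi$-values differ. Hence $|M(T)|$ is at least the number of special points in $T$, which is a single \emph{unweighted} hypergeometric count with mean at least $(1-\alpha)m$. A standard Chernoff--Hoeffding bound for sampling without replacement shows this count is at least $(1-\alpha)m/2$ except with probability $\exp(-\Theta_\alpha(m))$, with no negative-association machinery and no case split. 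Substituting this count for $|M(T)|$ turns the step you called your main obstacle into a one-line tail bound.
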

\begin{proof}
We give an algorithm whose only quantum component is
Grover search. Specifically, we will only use the
fact that, given query
access to $N$ items of which $M$ are marked, Grover
search finds a marked item with probability $2/3$ using
$O(\sqrt{N/M})$ queries (see, e.g.,~\cite{bht,brassard1998quantum}). 
We will follow the convention in the quantum query
literature 
and view the input to $\PTP_{n,\alpha}$ as a function
$\phi\colon\{1,2,\ldots,n\}\to\{1,2,\ldots,n\},$ where
the algorithm has query access to $\phi.$ 

Let $s$ be an integer parameter to be determined later. Our
algorithm starts by choosing a uniformly random subset
$S\subseteq\{1,2,\ldots,n\}$ of cardinality $|S|=s.$
Next, we query $\phi$ at every point of $S.$ If $\phi$
is not one-to-one on $S,$ we output ``false.''
In the complementary case, 
we execute Grover search $\log(1/\epsilon)$ times independently,
each time looking for a point $i\in\overline S$ with the
property that $\phi(i)\in\phi(S).$ We output ``false''
if such a point is found, and ``true'' otherwise.

If $\phi$ is a permutation, the described algorithm is
always correct.  In the complementary case when $|\im\phi|\leq\alpha n,$ 
there are at least $(1-\alpha)n$ points
$i\in\{1,2,\ldots,n\}$ such that $|\phi^{-1}(\phi(i))|\geq 2.$ 
Call such points \emph{special.}
We will henceforth assume that $S$ contains at least
$(1-\alpha)s/2$
special points, which happens with probability 
at least $1-\exp(-\Theta_\alpha(s)).$ If $\phi$ is not
one-to-one on $S,$ the algorithm correctly outputs
``false.'' If $\phi$ is one-to-one on $S$ and $S$
contains at least $(1-\alpha)s/2$ special points, then 
each of the Grover executions has $\geq(1-\alpha)s/2$
eligible points to output from among a total of
$|\overline S|=n-s$ possibilities; this means that each 
Grover execution finds an eligible point with probability at
least $2/3$ using $O(\sqrt{n/((1-\alpha)s)})$
queries, thereby forcing the correct output. In
summary, the described algorithm 
has error probability at most
$\exp(-\Theta_\alpha(s))+(1/3)^{\log(1/\epsilon)}$ and
query cost $s + O(\sqrt{n/((1-\alpha)s)}\cdot\log(1/\epsilon)).$ 
In particular, error $\epsilon$ can be achieved with
query cost $O(n^{1/3}\log^{2/3}(1/\epsilon)).$
This query bound in turn
implies~(\ref{eq:ptp-eps-upper}) using the standard
transformation of a quantum query algorithm to a
polynomial; see, e.g., Ambainis~\cite{ambainis05collision}.
\end{proof}

\section{QMA Lower Bounds}
The objective of this section is to ``lift'' the approximate degree lower bound of
Theorem~\ref{ptpthm} to QMA query complexity.
As our first step, we generalize our lower bound to one-sided approximation.
The \emph{one-sided} $\eps$-approximate degree of a function $f \colon X \to \Re$, denoted
$\deg^+_{\eps}(f)$, is the least degree of a polynomial $p \colon X \to \Re$
such that 
$|p(x)| \leq \eps$ for all $x\in f^{-1}(0),$ and
$p(x) \geq  1-\epsilon$ for all $x\in f^{-1}(1).$
Thus, $p$ approximates $f$ uniformly on $f^{-1}(0)$ but may 
take on arbitrarily large values on $f^{-1}(1).$ It is clear from the definition
that $\deg^+_\eps(f)\leq\deg_\eps(f).$
The gap between these quantities can be large in general, such as $1$ versus $\Omega(\sqrt n)$ 
for the bounded-error approximation of $\OR_n$. However, we will show that
these two notions of approximation are equivalent for the permutation testing function.
\begin{prop}
\label{prop:ptp-onesided-twosided}
For all $\alpha,\epsilon,$ and $n,$
\begin{align}
\deg^+_\epsilon(\PTP_{n,\alpha})&= \deg_\epsilon(\PTP_{n,\alpha}).
\label{eq:onesided-twosided-ptp}
\end{align}
\end{prop}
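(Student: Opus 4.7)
The inequality $\deg^+_\epsilon(\PTP_{n,\alpha}) \leq \deg_\epsilon(\PTP_{n,\alpha})$ is immediate from the definitions, so the real task is to prove the reverse direction. The plan is to exploit the very large symmetry group of $\PTP_{n,\alpha}$ via a symmetrization argument: starting from a one-sided $\epsilon$-approximation $p$ of degree $d$, I will average $p$ over the natural action of $G = S_n \times S_n$ on $n \times n$ matrices given by $(\tau,\sigma)\cdot x = \tau x \sigma$, where each permutation is identified with its permutation matrix. The averaged polynomial $\tilde p(x) = |G|^{-1} \sum_{(\tau,\sigma)\in G} p(\tau x \sigma)$ is again of degree at most $d$, since each summand is a polynomial of degree $d$ in the entries of $x$.

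The next step is to verify two invariance facts. First, both $\PTP_{n,\alpha}^{-1}(1)$ (the set of permutation matrices) and $\PTP_{n,\alpha}^{-1}(0)$ (matrices in $\Domain_{n,n}$ whose associated mapping has image of size at most $\alpha n$) are $G$-invariant: row permutations preserve the row-stochastic structure of $\Domain_{n,n}$, and both row and column permutations preserve the image size of the associated function. As a consequence, $\tilde p$ inherits the one-sided bounds of $p$, namely $|\tilde p(x)| \leq \epsilon$ on $\PTP_{n,\alpha}^{-1}(0)$ and $\tilde p(x) \geq 1-\epsilon$ on $\PTP_{n,\alpha}^{-1}(1)$.

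Second, $G$ acts transitively on permutation matrices: given permutations $\phi$ and $\psi$, the element $(\psi\phi^{-1},\mathrm{id})$ sends the matrix of $\phi$ to the matrix of $\psi$. Since $\tilde p$ is $G$-invariant by construction, it must take a single constant value $C \geq 1-\epsilon$ on every permutation matrix. To conclude, I will split into two cases according to the size of $C$. If $C \leq 1+\epsilon$, then $\tilde p$ already satisfies $|\tilde p(x)-\PTP_{n,\alpha}(x)| \leq \epsilon$ everywhere on the domain, giving a two-sided $\epsilon$-approximation of degree at most $d$. If instead $C > 1+\epsilon$, I rescale to $\tilde p / C$, which equals $1$ identically on positive instances and satisfies $|\tilde p(x)/C| \leq \epsilon/C < \epsilon$ on negative instances, again yielding a two-sided $\epsilon$-approximation of degree at most $d$.

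I do not anticipate a serious obstacle in carrying out this plan; the only point that needs careful verification is the $G$-invariance of the negative-instance set under column permutations, which do alter the underlying function $\phi$ but preserve the cardinality of its image. The whole argument is quite general and would apply to any partial Boolean function whose positive instances form a single orbit under a group action that also preserves the set of negative instances, which is why the author frames the result as a clean equivalence.
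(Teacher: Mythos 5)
Your proof is correct and follows essentially the same route as the paper: symmetrize $p$ over the $S_n \times S_n$ action on rows and columns, observe that $G$-invariance of the positive and negative instance sets preserves the one-sided bounds, use transitivity on permutation matrices to conclude $\tilde p$ is constant on $\PTP_{n,\alpha}^{-1}(1)$, and then normalize. The paper packages your two cases into the single formula $p^*/\max\{1,M\}$, but that is a cosmetic difference.
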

\noindent
This equality of approximate degree and one-sided approximate degree for permutation 
testing has the important consequence that the lower bound of Theorem~\ref{ptpthm}
applies to the one-sided setting as well. The proof of Proposition~\ref{prop:ptp-onesided-twosided}
is based on the observation that any one-sided approximant for permutation testing
can be symmetrized to be constant on $f^{-1}(1),$ effectively making it a two-sided
approximant. This technique was used previously in~\cite[Theorem 2]{BT15} to argue
that $\deg^+_\epsilon(\ED_n)=\deg_\epsilon(\ED_n).$

\begin{proof}[Proof of Proposition~\emph{\ref{prop:ptp-onesided-twosided}}.]
Let $p$ be a one-sided approximant for $\PTP_{n,\alpha}$ with error $\epsilon,$ 
so that $|p|\leq \epsilon$ on $\PTP_{n,\alpha}^{-1}(0)$ and
$p\geq 1-\epsilon$ on $\PTP_{n,\alpha}^{-1}(1).$ Define 
\begin{align}
p^*(x)=\Exp p(\sigma x\tau),
\label{eq:p-star}
\end{align}
where $\sigma,\tau$ are uniformly random permutations on $\{1,2,\ldots,n\},$ and $\sigma x\tau$
denotes the matrix obtained by permuting the rows of $x$ according to $\sigma$ and the columns according to $\tau.$
Then $p^*$ is also a one-sided approximant for $\PTP_{n,\alpha}$ 
because $\PTP_{n,\alpha}^{-1}(0)$ 
and $\PTP_{n,\alpha}^{-1}(1)$ are closed under permutations of rows and columns. 
Moreover, $p^*$ takes on the
same value, call it $M$, at all 
$x\in\PTP_{n,\alpha}^{-1}(1)$ because $\sigma x\tau$ in (\ref{eq:p-star}) 
is a uniformly random permutation matrix in that case. As a result, the
normalized polynomial $p^*/\max\{1,M\}$ approximates $\PTP_{n,\alpha}$ pointwise within $\epsilon.$ 
Finally, $\deg p^* \leq \deg p$ because $p^*$ is an average of polynomials,
each obtained from $p$ by permuting the input variables. 
\end{proof}

We will also need the following proposition, implicit in Marriott and Watrous's
proof~\cite{marriott2005quantum} of Vyalyi's result \cite{vyalyi2003qma} on
$\QMA$ and $\SBQP.$ For completeness, we include its short proof. 

\begin{prop} 
\label{prop:qma-to-poly}
Suppose that $f \colon X \to \zoo$ has a QMA query protocol with witness length $m$
and query cost $q$. Then there is a polynomial $p \colon X \to \Re$ such that
\begin{align}
&\deg p=O(mq),\label{eq:p-deg}\\
&|p(x)| \leq 2^{-2m}&&\text{for all $x \in f^{-1}(0),$}
\label{eq:p-false}\\
&p(x) \geq 2^{-m-1}&& \text{for all $x \in f^{-1}(1).$} 
\label{eq:p-true}
\end{align}
\end{prop}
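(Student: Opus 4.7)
The plan is to combine witness-preserving QMA amplification (Marriott--Watrous~\cite{marriott2005quantum}) with the standard polynomial method of Beals et al.

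\textbf{Step 1: Amplification.} First I would apply Marriott--Watrous in-place amplification to the given QMA protocol to obtain a new protocol with completeness $\geq 1 - 2^{-3m-2}$ and soundness $\leq 2^{-3m-2}$, while preserving the witness length at $m$ qubits and inflating the query complexity to $O(mq)$. This is the key nontrivial ingredient: naive parallel repetition would multiply the witness length by the number of repetitions, which would ruin the desired degree bound; the point of MW's analysis via eigenvalue estimation of a product of two reflections is to drive the error down to $2^{-\Theta(m)}$ without enlarging the witness register.

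\textbf{Step 2: Acceptance operator.} For each input $x$, the amplified protocol induces a $2^m\times 2^m$ positive semidefinite matrix $Q(x)$ with the defining property that, given witness $|\psi\rangle$, Arthur accepts with probability $\langle \psi | Q(x) | \psi \rangle$. Since the amplified circuit makes $O(mq)$ queries to $x$, the polynomial method of Beals, Buhrman, Cleve, Mosca, and de Wolf implies that each entry of $Q(x)$ is a polynomial in the bits of $x$ of total degree $O(mq)$.

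\textbf{Step 3: The polynomial $p$.} Define
\begin{equation*}
p(x) \;=\; \operatorname{tr} Q(x),
\end{equation*}
so that $\deg p = O(mq)$, establishing~(\ref{eq:p-deg}). For $x\in f^{-1}(1)$, some witness is accepted with probability $\geq 1-2^{-3m-2}$, hence $\lambda_{\max}(Q(x))\geq 1/2$; since $Q(x)$ is PSD, $p(x)=\operatorname{tr} Q(x)\geq \lambda_{\max}(Q(x))\geq 1/2\geq 2^{-m-1}$, giving~(\ref{eq:p-true}). For $x\in f^{-1}(0)$, every witness is accepted with probability at most $2^{-3m-2}$, so every eigenvalue of $Q(x)$ is bounded by $2^{-3m-2}$, whence $|p(x)|=\operatorname{tr} Q(x)\leq 2^m\cdot 2^{-3m-2}=2^{-2m-2}\leq 2^{-2m}$, giving~(\ref{eq:p-false}).

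The only real obstacle is Step~1: exponentially amplifying QMA error in $m$ without enlarging the witness register, which is precisely the content of Marriott--Watrous. Once that black box is in hand, Steps~2 and~3 are the routine passage from quantum circuits to polynomials together with an elementary PSD eigenvalue bound.
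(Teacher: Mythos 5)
Your proof is correct and takes essentially the same approach as the paper: amplify via Marriott--Watrous to drive the error to $2^{-\Theta(m)}$ while keeping witness length $m$ and paying only an $O(m)$ factor in queries, then apply the polynomial method to the acceptance operator. The paper phrases Step~3 by running the amplified verifier with the maximally mixed witness and invoking the Beals et al.\ theorem directly on the resulting query algorithm; its polynomial is the acceptance probability, i.e.\ $2^{-m}\operatorname{tr}Q(x)$, which is your $p$ up to the scalar $2^{-m}$ (this is also why the paper can get away with the milder target error $2^{-2m}$ rather than your $2^{-3m-2}$). The two formulations are interchangeable, and your eigenvalue bookkeeping for the trace on each side is sound.
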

\begin{proof}
Marriott and Watrous \cite{marriott2005quantum}
showed that the soundness and completeness errors of the QMA query protocol for $f$ can be driven down to $2^{-2m}$
without an increase in witness length, and with only a factor of $O(m)$ increase in query cost.
This yields a QMA protocol $\mathcal{Q}$ for $f$ that has witness length $m$, query cost $O(m q)$, and soundness and completeness errors $2^{-2m}$.
That is, on any input in $f^{-1}(1)$, there exists a witness that causes Arthur to accept with probability
at least $1-2^{-2m}$, and on any input in $f^{-1}(0)$, for every witness that might be sent by Merlin, Arthur accepts with probability
at most $2^{-2m}$. 

Now run $\mathcal{Q}$ with the witness fixed to the totally mixed state. This yields
a quantum query algorithm $\mathcal{A}$.
On inputs in $f^{-1}(0)$, the acceptance probability of $\mathcal{A}$ is at most the soundness
error of $\mathcal{Q}$, which is at most $2^{-2m}$. On inputs in $f^{-1}(1)$ , the acceptance probability 
of $\mathcal{A}$ is at least $(1-2^{-2m}) \cdot 2^{-m} \geq 2^{-m-1}$. 
Now (\ref{eq:p-deg})--(\ref{eq:p-true}) follow from the well-known result of Beals et al.~\cite{beals} that the acceptance probability of any $T$-query quantum algorithm on input $x$ is a polynomial $p(x)$ of degree at most $2T$. 
\end{proof}

We have reached our main result on the QMA complexity of
permutation testing, stated as Theorem~\ref{thm:qma} in the introduction.
For the reader's convenience, we restate the theorem here.

\begin{thm*}
Let $0<\alpha<1$ be an arbitrary constant. Then any QMA query 
protocol for $\PTP_{n,\alpha}$ with 
witness length $m$ has query cost $\Omega(n/m)^{1/3}$.
In particular, $\PTP_{n,\alpha}$ has QMA complexity $\Omega(n^{1/4}).$
\end{thm*}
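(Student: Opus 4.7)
The plan is to combine the three ingredients assembled earlier in the paper: Proposition~\ref{prop:qma-to-poly}, which converts a $\QMA$ protocol into a low-degree polynomial with one-sided approximation guarantees; Proposition~\ref{prop:ptp-onesided-twosided}, which upgrades one-sided approximants for $\PTP_{n,\alpha}$ into two-sided ones; and Theorem~\ref{ptpthm}, which provides a lower bound on the vanishing-error approximate degree of $\PTP_{n,\alpha}$.

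First, starting from a $\QMA$ query protocol for $\PTP_{n,\alpha}$ with witness length $m$ and query cost $q$, Proposition~\ref{prop:qma-to-poly} yields a polynomial $p$ of degree $O(mq)$ with $|p(x)|\leq 2^{-2m}$ on $\PTP_{n,\alpha}^{-1}(0)$ and $p(x)\geq 2^{-m-1}$ on $\PTP_{n,\alpha}^{-1}(1)$. Rescaling by $2^{m+1}$ gives a polynomial $\tilde p$ of the same degree satisfying $|\tilde p|\leq 2^{1-m}$ on negative instances and $\tilde p\geq 1$ on positive instances. Hence $\tilde p$ is a one-sided $\epsilon$-approximant for $\PTP_{n,\alpha}$ with $\epsilon=2^{1-m}$, showing that
\[
\deg^+_{2^{1-m}}(\PTP_{n,\alpha}) = O(mq).
\]

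Next, I would invoke Proposition~\ref{prop:ptp-onesided-twosided} to promote this one-sided approximant into an ordinary two-sided $\epsilon$-approximant of the same degree, yielding $\deg_{2^{1-m}}(\PTP_{n,\alpha})=O(mq)$. Applying Theorem~\ref{ptpthm} with $\epsilon=2^{1-m}$, so that $\log(1/\epsilon)=\Theta(m)$, gives
\[
mq \;\geq\; \deg_{2^{1-m}}(\PTP_{n,\alpha}) \;=\; \Omega\!\left(n^{1/3}\,m^{2/3}\right),
\]
which rearranges to $q=\Omega((n/m)^{1/3})$, the per-witness-length bound claimed in the theorem. To deduce the $\Omega(n^{1/4})$ bound on the total $\QMA$ complexity $m+q$, I would split into cases on $m$: if $m\geq n^{1/4}$, then $m+q\geq n^{1/4}$ trivially; if $m<n^{1/4}$, then $q=\Omega((n/m)^{1/3})\geq\Omega((n/n^{1/4})^{1/3})=\Omega(n^{1/4})$, so again $m+q=\Omega(n^{1/4})$.

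The argument is almost entirely assembly of prior results, so there is no substantial obstacle; the only routine care needed is to check that $\epsilon=2^{1-m}$ falls in the valid range $1/3^n\leq\epsilon\leq 1/3$ of Theorem~\ref{ptpthm}. The upper bound $\epsilon\leq 1/3$ requires $m\geq 3$, and for smaller $m$ the witness length is constant and the bound reduces to a standard $\mathsf{BQP}$ lower bound of $\Omega(n^{1/3})$. The lower bound $\epsilon\geq 1/3^n$ corresponds to $m=O(n)$, which we may assume without loss of generality since any $\QMA$ protocol of witness length $\omega(n)$ already exceeds the target $n^{1/4}$ bound.
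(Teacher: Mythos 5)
Your proof is correct and follows essentially the same route as the paper: invoke Proposition~\ref{prop:qma-to-poly} to get a one-sided $2^{1-m}$-approximant of degree $O(mq)$, use Proposition~\ref{prop:ptp-onesided-twosided} to transfer the lower bound of Theorem~\ref{ptpthm} to the one-sided setting, and compare. Your explicit case split on $m$ for the $\Omega(n^{1/4})$ conclusion and your discussion of the boundary cases $m<3$ and $m=\omega(n)$ are slightly more detailed than the paper's, which handles them by restricting to $m\in[3,n]$, but the substance is the same.
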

\begin{proof}
Fix a QMA query protocol for $\PTP_{n,\alpha}$ with witness length $m\in[3,n]$
and query cost $q.$ Then Proposition~\ref{prop:qma-to-poly} gives a 
polynomial $p$ satisfying (\ref{eq:p-deg})--(\ref{eq:p-true}). It follows
that $2^{m+1}p$ approximates $\PTP_{n,\alpha}$ in a one-sided manner
to error $2^{-m+1},$ forcing
$\deg^+_{2^{-m+1}}(\PTP_{n,\alpha})=O(mq).$
On the other hand, taking $\epsilon=2^{-m+1}$ in Theorem~\ref{ptpthm} and Proposition~\ref{prop:ptp-onesided-twosided} 
shows that $\deg^+_{2^{-m+1}}(\PTP_{n,\alpha})=\Omega(n^{1/3}m^{2/3}).$
Comparing these complementary bounds on the one-sided approximate degree of
permutation testing gives
$q=\Omega(n/m)^{1/3}$ and thus $\max\{m,q\}=\Omega(n^{1/4}).$
\end{proof}

We remind the reader that by virtue of Proposition~\ref{prop:ptp-ptp},
the variant of permutation testing studied in this
paper is equivalent to Aaronson's permutation testing problem~\cite{aaronson}. 
As a result, Theorems~\ref{thm:qma},
\ref{eq:ptp-eps-deg}, and~\ref{thm:ptp-qquery-upper}
and Proposition~\ref{eq:onesided-twosided-ptp}
remain valid with $\PTP_{n,\alpha}$ replaced by 
$\PTP^*_{n,\alpha}$.

\section{Open Problems}
A natural next step would be to close the gap between our $\Omega(n^{1/4})$ QMA lower bound for permutation testing 
and the known upper bound of $O(n^{1/3})$. 
In addition, we highlight the well-known open question of resolving the QMA communication complexity of set disjointness.
The best known lower bound here is $\Omega(n^{1/3})$ \cite{klauck}, while the best upper bound is $O(n^{1/2})$. 
We believe that both questions highlight significant gaps in our understanding of $\QMA.$
Another natural open question is whether the na\" ive error-reduction method for approximate degree is optimal.
Namely, it is well known that $\deg_{\eps}(f) \leq O( \min\{\deg_{1/3}(f)\log(1/\epsilon), n\})$
for every $f\colon\zoon\to\zoo,$ yet this bound is not known to be tight for any such $f$ with
sublinear approximate degree. 
It \emph{is} tight for some $f$ whose domain is a proper subset of $\zoon,$
based for example on approximate counting.

\bibliographystyle{alpha}
\newcommand{\etalchar}[1]{$^{#1}$}

\end{document}